\theoremstyle{plain}
\newtheorem{lemma}{Lemma}
\newtheorem{theorem}{Theorem}
\newtheorem{definition}{Definition}
\theoremstyle{definition}
\newcommand{\argmax}{\operatornamewithlimits{argmax}}
\newcommand{\argmin}{\operatornamewithlimits{argmin}}
\begin{document}

\newlength{\figurewidth}\setlength{\figurewidth}{0.6\columnwidth}


\addtolength{\topmargin}{-0.5\baselineskip}
\addtolength{\textheight}{\baselineskip}

\title{Let Cognitive Radios Imitate: Imitation-based Spectrum Access for Cognitive Radio Networks}

\newcounter{one}
\setcounter{one}{1}
\newcounter{two}
\setcounter{two}{2}

\author{Stefano Iellamo\footnote{S. Iellamo and M. Coupechoux are with the department of Computer Science and Networks of Telecom ParisTech - LTCI CNRS 5141, 46 Rue Barrault, Paris 75013, France (e-mail: \{iellamo,coupecho\}@enst.fr).} \qquad\qquad Lin Chen\footnote{L. Chen is with Laboratoire de Recherche en Informatique (LRI), the department of Computer Science of the University of Paris-Sud XI, 91405 Orsay, France (e-mail: chen@lri.fr).} \qquad\qquad Marceau Coupechoux}

\addtolength{\floatsep}{-\baselineskip}
\addtolength{\dblfloatsep}{-\baselineskip}
\addtolength{\textfloatsep}{-\baselineskip}
\addtolength{\dbltextfloatsep}{-\baselineskip}
\addtolength{\abovedisplayskip}{-1ex}
\addtolength{\belowdisplayskip}{-1ex}
\addtolength{\abovedisplayshortskip}{-1ex}
\addtolength{\belowdisplayshortskip}{-1ex}

\maketitle

\vspace{-1cm}

\begin{abstract}
In this paper, we tackle the problem of opportunistic spectrum access in large-scale cognitive radio networks, where the unlicensed Secondary Users (SU) access the
frequency channels partially occupied by the licensed Primary Users (PU). Each channel is characterized by an availability probability unknown to the SUs. We apply evolutionary game theory to model
the spectrum access problem and develop distributed spectrum access policies based on imitation, a behavior rule widely applied in human societies consisting of imitating successful behavior.
We first develop two imitation-based spectrum access policies based on the basic Proportional Imitation (PI) rule and the more advanced Double Imitation (DI) rule given that a SU can imitate any other SUs. We then
adapt the proposed policies to a more practical scenario where a SU can only imitate the
other SUs operating on the same channel. A systematic theoretical analysis is presented for both scenarios on the induced imitation dynamics and the convergence properties of the proposed policies to an imitation-stable equilibrium, which is also the $\epsilon$-optimum of the system.
Simple, natural and incentive-compatible, the proposed imitation-based spectrum access policies can be implemented distributedly
based on solely local interactions and thus is especially
suited in decentralized adaptive learning environments
as cognitive radio networks.
\end{abstract}




\section{Introduction}
\label{sec:intro}

\textit{Cognitive radio} \cite{Haykin05}, with its capability to flexibly configure its transmission parameters, has emerged in recent years as a promising paradigm to enable more efficient spectrum utilization. Spectrum access models in cognitive radio networks can be classified into three categories, namely exclusive use (or operator sharing), commons and shared use of primary licensed spectrum~\cite{Bud07}. In the last model, unlicensed secondary users (SU) are allowed to access the spectrum of licensed primary users (PU) in an opportunistic way. In this case, a well-designed spectrum access mechanism is crucial to achieve efficient spectrum usage.

In this paper, we focus on the generic model of cognitive networks consisting of multiple frequency channels, each characterized by a channel availability probability determined by the activity of PUs on it. In such model, from the individual SU's perspective, a challenging problem is to compete (or coordinate) with other SUs in order to opportunistically access the unused spectrum of PUs to maximize its own payoff (e.g., throughput); at the system level, a crucial research issue is to design efficient spectrum access protocols achieving optimal spectrum usage.

We tackle the spectrum access problem in large-scale cognitive radio networks from an evolutionary game theoretic angle. We formulate the spectrum access problem as a non-cooperative game and develop distributed spectrum access policies based on imitation, a behavior rule widely applied in human societies consisting of imitating successful behavior. We establish the convergence of the proposed policies to an imitation-stable equilibrium which is also the $\epsilon$-optimum of the system. Simple, natural and incentive-compatible mechanism, the proposed spectrum access policies can be implemented distributedly based on solely local interactions and thus is especially suited in decentralized adaptive learning environments as cognitive radio networks.

The motivation of applying evolutionary game theory and imitation-based strategy in the study of the spectrum access problem is tri-fold.
\begin{itemize}
\item First, evolutionary game theory is a powerful tool to study the interaction among players and the system dynamic in terms of population. Stemmed from classic game theory and Darwin's evolution theory, it can explicitly capture the fundamental relationship among \textit{competition}, \textit{cooperation} and \textit{communication}, three crucial elements in the design of any spectrum access protocols in cognitive radio networks.
\item Second, compared with replicator dynamic, the most explored evolutionary model which mimics the effect of natural selection, imitation dynamic captures the spreading of successful strategies through imitation ra\-ther than inheritance, which is more adapted in games played by autonomous decision makers as in our case.
\item Third, evolutionary game theory, especially imitation dynamic which relies solely on local interactions, provides a theoretic tool for the design of distributed channel access protocols based on local information which is particularly suited in decentralized environments as cognitive radio networks.
\end{itemize}

In our analysis, we start by developing the imitation-based spectrum access policies where a SU can imitate any other SUs. More specifically, we develop two spectrum access policies based on the following two imitation rules:
\begin{itemize}
\item the Proportional Imitation (PI) rule where a SU can sample one other SU;
\item the more advanced adjusted proportional imitation rule with double sampling (Double Imitation, DI) where a SU can sample two other SUs.
\end{itemize}
Under both imitation rules, each SU strives to improve its individual payoff by imitating other SUs with higher payoff. We then adapt the proposed spectrum access policies to a more practical scenario where a SU can only imitate the other SUs operating on the same channel. A systematic theoretical analysis is presented for both scenarios on the induced imitation dynamics and the convergence properties of the proposed policies to an imitation-stable equilibrium, which is also the $\epsilon$-optimum of the system.

The key contribution of our work in this paper lies in the systematical application of the natural imitation behavior to address the spectrum access problem in cognitive radio networks, the design of a distributed imitation-based channel access policy, and the theoretic analysis on the induced imitation dynamic and the convergence to an efficient and stable system equilibrium.

The rest of the paper is structured as follows. Section~\ref{sec:models} presents the system model followed by the formulation of the spectrum access game. Section~\ref{sec:imitation} describes the proposed imitation-based spectrum access policies in the scenario where a SU can imitate any other SUs. In Section~\ref{sec:NewScenario}, we adapt the proposed policies to the scenario where a SU can only imitate the other SUs operating on the same channel. Section~\ref{sec:simu} presents simulation results on the performance of the proposed policies. Section~\ref{sec:related_work} discusses related work in the literature. Section~\ref{sec:conclusion} concludes the paper.

\section{System Model and Spectrum Access Game Formulation}
\label{sec:models}
In this section, we present the system model of our work with the notations used, followed by the game formulation of the spectrum access problem, which serves as the basis of the analysis in subsequent sections.

\subsection{System Model}
We consider a primary network consisting of a set $\cal C$ of $C$ frequency channels, each with bandwidth $B$\footnote{Our analysis can be extended to study the heterogeneous case with different channel capacities.}. The users in the primary network are
operated in a synchronous time-slotted fashion. A set ${\cal N}$ of $N$ SUs tries to opportunistically access the channels when they are left free by PUs. Let $Z_i(k)$ be the random variable equal to $1$ when of channel $i$ is unoccupied by any PU at slot $k$ and $0$ otherwise. We assume that the process $\{Z_i(k)\}$ is stationary and independent for each $i$ and $k$. We also assume
that at each time slot, channel $i$ is free with probability $\mu_i$, i.e., $\mathbb{E}[Z_i(k)]=\mu_i$. The channel availability probabilities $\mu \triangleq \{\mu_i$\} are {\it a priori} not known by SUs. We assume perfect sensing at the SUs, i.e., any transmission of any PU on a channel is perfectly sensed by SUs sensing that channel and thus no collision occurs between PUs and SUs.

In our work, each SU $j$ is modelled as a rational decision maker, striking to maximize the throughput it can achieve, denoted as $T_j$, which can be expressed as a function of $\mu_i$ and $n_{s_j}$, where $s_j$ denotes the channel which $j$ chooses, $n_{s_j}$ denotes the number of SUs on channel $s_j$. More formally, the expected value of $T_j$ can be written as:
\begin{eqnarray*}
\mathbb{E}[T_j]=f(\mu_i, n_{s_j}).
\end{eqnarray*}
In order to perform a closed-form analysis, we focus on the scenario where the channel capacity is evenly shared among all SUs on the channel when it is free, i.e.,
\begin{eqnarray*}
\mathbb{E}[T_j]=f(\mu_{s_j}, n_{s_j})=B\mu_{s_j}/n_{s_j}.
\end{eqnarray*}
It should be noted that $f(\mu_{s_j}, n_{s_j})$ depends on the MAC protocol implemented at the cognitive users. Beside the evenly shared model considered in this paper, several other models are also largely applied in practice such as the CSMA-based random access model. Our work in this paper can be adapted in those cases by defining appropriate function $f$.

\subsection{Spectrum Access Game Formulation}
\label{subsec:congestion_game}

To study the interactions among autonomous selfish SUs and to derive distributed channel access policies, we formulate the channel selection problem as a spectrum access game where the players are the SUs. Each player $j$ stays on a channel $i$ to opportunistically exploit the unused spectrum of PUs to maximize its expected throughput.
The game is defined formally as follows:

\begin{definition}
The spectrum access game $G$ is a 3-tuple ($\cal N$, $\cal C$, $\{U_j\}$), where $\cal N$ is the player set, $\cal C$
is the strategy set of each player. Each player $j$ chooses its strategy $s_j\in{\cal C}$
to maximize its normalized utility function $U_j$ defined as
\begin{eqnarray*}
U_j=\mathbb{E}[T_j]/B=\mu_{s_j}/n_{s_j}.
\end{eqnarray*}
\end{definition}



The solution of the spectrum access game $G$ is characterized by a
Nash Equilibrium (NE)~\cite{Myerson91}, a strategy profile from which no player has incentive to deviate unilaterally. Using the related theory on congestion games, we can establish the existence and the uniqueness of the NE in the spectrum access game $G$ for the asymptotic case ($N\rightarrow\infty$) in the following theorem.

\begin{theorem}
In the asymptotic case where $N$ is large, $G$ admits a unique NE. At the NE, there are $x_i^*N$ SUs staying with channel $i$, where $x_i^*=\frac{\mu_i}{\sum_{l\in {\cal C}} \mu_l}$.
\label{th:ne_congestion_game}
\end{theorem}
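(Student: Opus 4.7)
The plan is to exploit the congestion-game structure of $G$. Since each player's payoff $U_j = \mu_{s_j}/n_{s_j}$ depends only on the chosen channel and the number of players currently using it, and is strictly decreasing in that number, $G$ falls in the class of Rosenthal's congestion games. Existence of a pure-strategy NE then follows from the standard exact-potential argument: taking $\Phi(s) = \sum_{i\in{\cal C}} \sum_{k=1}^{n_i} \mu_i/k$, every unilateral improving deviation strictly increases $\Phi$, so any best-response dynamic terminates at a NE.

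For uniqueness and for the explicit form, I would characterize any NE by the usual no-profitable-deviation inequality: for every player on channel $i$ and every alternative channel $k$,
\[
\mu_i/n_i \;\geq\; \mu_k/(n_k+1).
\]
A short preliminary step shows that every channel must be populated once $N$ is large enough: at least one channel carries $n_i \geq N/C$ users and hence offers payoff at most $C\mu_i/N$, which is dominated by the payoff $\mu_k$ a deviator would obtain on an empty channel $k$. Rewriting the inequality in terms of the occupation fractions $x_i \triangleq n_i/N$ yields $\mu_i/x_i \geq \mu_k N/(x_k N + 1)$, and in the asymptotic regime $N\to\infty$ it collapses to the equalization condition $\mu_i/x_i = \mu_k/x_k$ for all $i,k \in {\cal C}$. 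Combined with the normalization $\sum_{i\in{\cal C}} x_i = 1$, this uniquely pins down $x_i^* = \mu_i / \sum_{l\in{\cal C}} \mu_l$.

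The main obstacle is the integrality of the $n_i$: for any finite $N$ there are typically several NE differing by at most one user per channel, so uniqueness has to be understood asymptotically, in the sense that the normalized occupation vector of every NE converges to $x^*$. The $+1$ correction in $n_k+1$ versus $n_k$ disappears under the $1/N$ normalization, which is precisely why a clean uniqueness statement requires the asymptotic regime; at finite $N$ one only obtains $n_i = x_i^* N + O(1)$.
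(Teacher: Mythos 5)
Your proof is correct in substance but takes a genuinely different route from the paper's. The paper also starts from the congestion-game structure, but then passes immediately to the continuum model and identifies the NE as the unique maximizer of the concave potential $P(\mathbf{x})=\sum_{i}\frac{\mu_i}{N}\log x_i$ over the simplex, with existence, uniqueness and the closed form all delivered at once by the KKT conditions. You instead establish existence at finite $N$ via Rosenthal's exact potential $\sum_{i}\sum_{k=1}^{n_i}\mu_i/k$, characterize equilibria by the no-deviation inequalities $\mu_i/n_i\ge \mu_k/(n_k+1)$, and let $N\to\infty$ to obtain payoff equalization. The two computations are essentially dual (the Lagrange multiplier in the paper's KKT system is exactly the common equilibrium payoff), but your version is more elementary, makes the finite-$N$ multiplicity of equilibria explicit, and yields the quantitative refinement $n_i=x_i^*N+O(1)$, which the continuum argument does not provide; the paper's version, in exchange, gets uniqueness in the continuum model with no case analysis. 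One small step needs tightening in your write-up: to pass from $\mu_i/(x_iN)\ge \mu_k/(x_kN+1)$ to the equalization $\mu_i/x_i=\mu_k/x_k$ you need each $x_k$ bounded away from zero, not merely $n_k\ge 1$. This follows from the same pigeonhole observation you already use: applying the deviation inequality from the most loaded channel $i$ (with $n_i\ge N/C$) to any channel $k$ gives $n_k+1\ge \mu_k n_i/\mu_i\ge \mu_k N/C$, hence $x_k\ge \mu_k/C-1/N$, after which the limit argument goes through.
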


\begin{proof}
Given the form of SUs' utility function, it follows from in~\cite{Milchtaich96} that the spectrum access game is a congestion game. Moreover, in the asymptotic case approximating the game $G$ by a game with a continuous set of users, denote $\mathbf{x}\triangleq\{x_i, i\in{\cal C}\}$ where $x_i$ denotes the proportion of SUs choosing channel $i$, we can write the potential function of the congestion game as follows:
\begin{equation*}
P(x)\triangleq \sum_{i\in {\cal C}}\int_{\epsilon_0}^{x_iN} \frac{\mu_i}{t}dt,
\end{equation*}
where $\epsilon_0>0$ is a small constant introduced to avoid the non-integral point of $\mu_i/t$ at $0$. We can verify that for a SU $j$ staying on channel $i$, it holds that :
\begin{eqnarray*}
\frac{\partial P(x)}{\partial x_i} = \mathbb{E}[U_j(\mu_i, x_iN)].
\end{eqnarray*}

To derive the NE of $G$, we seek the maximum of $P(x)$. To this end, we develop $P(x)$ as
\begin{equation*}
P(x) = \sum_{i\in {\cal C}} \frac{\mu_i}{N} (\log x_i - \log \epsilon_0).
\end{equation*}

To find the maximum of $P(x)$, we solve the following optimization problem
\begin{eqnarray*}
\max_{\mathbf{x}} \ P(x) \quad s.t. \quad \sum_{i\in {\cal C}} x_i =1 \ \mbox{and} \ x_i>0, \forall i\in{\cal C},
\end{eqnarray*}
which has a unique solution because the KKT conditions are necessary and sufficient as $P(x)$ is concave and the constraint is linear. After some straightforward algebraic operations, we can find the unique maximum $\mathbf{x^*}\triangleq \{x_i^*\}$ as follows:
\begin{equation*}
x_i^*=\frac{\mu_i}{\sum_{l\in {\cal C}} \mu_l} \quad \forall i\in{\cal C}.
\end{equation*}
The maximum $\mathbf{x^*}$ is also the unique NE of $G$.
\end{proof}

We can observe two desirable properties of the unique NE derived in Theorem~\ref{th:ne_congestion_game}:

\begin{compactitem}
\item the NE is optimal from the system perspective as the total throughput of the network achieves its optimum at the NE;
\item the NE ensures that the spectrum resource is shared fairly among SUs.
\end{compactitem}

One critical challenge in the analyzed spectrum access game is the design of distributed spectrum access strategies for rational SUs to converge to the NE without the \textit{a priori} knowledge of $\mu$. In response to this challenge, we develop in the sequel sections of this paper an efficient spectrum access policy. Our proposed policy can be implemented distributedly based on solely local interactions without any knowledge on the channel statistics and thus is especially suited in decentralized adaptive learning environments as cognitive radio networks. In terms of performance, we demonstrate both analytically and numerically that the proposed channel access policy converges to the $\epsilon$-NE\footnote{A strategy profile is an $\epsilon$-NE if no player can gain more than $\epsilon$ in payoff by unilaterally deviating from his strategy.} of $G$ which is also the $\epsilon$-optimum of the system.

\section{Imitation-based spectrum access policies}
\label{sec:imitation}

The spectrum access policy we develop is based on \textit{imitation}. As a behavior rule widely observed in human societies, imitation captures the behavior of a rational player that mimics the actions of other players with higher payoff in order to improve its own payoff. The induced imitation dynamic models the spreading of successful strategies under imitation~\cite{Schlag96}. In this section, we focus on the scenario where a SU can imitate any other SUs and develop two spectrum access policies based on the proportional imitation rule and the double imitation rule. We analyze the induced dynamic of the imitation process and show the convergence of the proposed policy to the $\epsilon$-NE of $G$. In the next section, we extend our efforts to a more practical scenario where a SU can only imitate the other SUs operating on the same channel and develop an adapted imitation-based spectrum access policy in the new context.

\subsection{Spectrum Access Policy Based on Proportional Imitation}

Algorithm~\ref{algo:pir} presents our proposed spectrum access policy based on the proportional imitation rule, termed as PISAP. The core idea is:
at each iteration, each SU randomly selects another SU in the network; if the payoff of the selected SU is higher than its own payoff, the SU imitates the strategy of the selected SU at the next iteration with a probability proportional to the payoff difference, coefficiented by the imitation factor $\sigma$.\footnote{One way of setting $\sigma$ is to set $\sigma=1/(\omega-\alpha)$, where $\omega$ and $\alpha$ are two exogenous parameters such that $U_j \in [\alpha,\omega], \forall j\in{\cal C}$.}

We first study the dynamic induced by PISAP by setting $\epsilon_U=0$. It is shown in~\cite{Sand10} that in the asymptotic case, the proportional imitation rule in Algorithm~\ref{algo:pir} generates a population dynamic described by the following set of differential equations:
\begin{eqnarray}
\dot x_i(t) = \sigma x_i(t) [\pi_i(t)-\overline{\pi}(t)] \quad i\in{\cal C},
\label{eq:pir_dynamic}
\end{eqnarray}
where $\pi_i$ denotes the expected payoff of the SUs on channel $i$, $\overline{\pi}\triangleq \sum_{i\in{\cal C}}x_i\pi_i$ denotes the expected payoff of all SUs in the network. Injecting $\pi_i=\mu_i/(x_iN)$ into the differential equations, \eqref{eq:pir_dynamic} becomes:
\begin{equation*}
\frac{\dot x_i(t)}{\sigma} = \frac{\mu_i}{N}-x_i(t)\sum_{l\in {\cal C}}\frac{\mu_l}{N}.
\end{equation*}
This equation can be easily solved as:
\begin{equation}
x_i(t) = K_ie^{-\left(\sum_{l\in {\cal C}}\frac{\mu_l}{N}\right)\sigma t}+\frac{\mu_i}{\sum_{l\in {\cal C}} \mu_l},
\label{eq:pir_x}
\end{equation}
where the constant
$K_i = x_i(0)-\frac{\mu_i}{\sum_{l\in {\cal C}} \mu_l}.$

As the first result of this section, the following theorem states the convergence of the dynamic to the NE of the spectrum access game $G$.

\begin{theorem}
\label{th:simple}
The imitation dynamic induced by PISAP converges exponentially to an evolutionary equilibrium which is also the NE of $G$.
\end{theorem}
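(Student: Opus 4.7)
The plan is to work directly from the closed-form solution \eqref{eq:pir_x} obtained in the paragraph immediately preceding the theorem, since the linearization has already been carried out. First I would verify that the trajectory $x(t)$ given by \eqref{eq:pir_x} remains a valid population state, i.e.\ lies on the simplex $\{x : \sum_i x_i = 1, x_i > 0\}$ for all $t \ge 0$. Summing the constants $K_i = x_i(0) - \mu_i/\sum_l \mu_l$ over $i \in \mathcal{C}$ and using $\sum_i x_i(0) = 1$ together with $\sum_i \mu_i / \sum_l \mu_l = 1$ yields $\sum_i K_i = 0$, so $\sum_i x_i(t) = 1$ for every $t$. Positivity then follows from the fact that $\dot x_i$ vanishes whenever $x_i = 0$ in \eqref{eq:pir_dynamic}, so the boundary cannot be crossed and each $x_i(t)$ stays strictly positive provided $x_i(0) > 0$.

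Next I would extract the limiting behavior. Since $\mu_i > 0$, $\sigma > 0$ and $N$ is finite, the exponent $-(\sum_{l\in\mathcal{C}} \mu_l / N)\sigma t$ is strictly negative for $t > 0$, so the transient term $K_i e^{-(\sum_l \mu_l/N)\sigma t}$ in \eqref{eq:pir_x} decays to $0$ as $t \to \infty$. Hence $x_i(t) \to \mu_i / \sum_{l \in \mathcal{C}} \mu_l = x_i^*$ for every $i$, and the convergence is exponential with rate $\sigma \sum_l \mu_l / N$, which gives the explicit bound $\lvert x_i(t) - x_i^* \rvert \le \lvert K_i \rvert e^{-(\sum_l \mu_l / N)\sigma t}$.

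It then remains to identify $x^*$ with the desired equilibrium. On the one hand, substituting $x_i^* = \mu_i / \sum_l \mu_l$ into $\pi_i = \mu_i/(x_i N)$ gives $\pi_i(x^*) = (\sum_l \mu_l)/N$ independent of $i$, so $\pi_i(x^*) = \overline{\pi}(x^*)$ for all $i$ and therefore $\dot x_i = 0$ at $x^*$; this makes $x^*$ a rest point of the imitation dynamic \eqref{eq:pir_dynamic}, hence an evolutionary equilibrium. On the other hand, by Theorem~\ref{th:ne_congestion_game}, $x^*$ is precisely the unique NE of $G$, which closes the argument.

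The proof is essentially a read-off from the closed-form solution, so there is no serious obstacle; the only care needed is the simplex-invariance step, which is what justifies reading convergence of the coordinates as convergence of a genuine population trajectory rather than a formal ODE solution.
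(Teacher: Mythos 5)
Your proof is correct and takes essentially the same route as the paper: the paper's own proof is the one-line remark that the theorem ``follows straightforwardly from \eqref{eq:pir_x} and Theorem~\ref{th:ne_congestion_game}'', and you have simply supplied the details of exactly that read-off (simplex invariance, exponential decay of the transient, identification of the limit as a rest point and as the unique NE). One small imprecision: once $\pi_i=\mu_i/(x_iN)$ is substituted, the field at $x_i=0$ does not vanish but points strictly inward ($\dot x_i=\sigma\mu_i/N>0$), so positivity holds for an even simpler reason --- and it also follows directly from the explicit solution, since $x_i(t)=K_ie^{-ct}+x_i^*\ge\min\{x_i(0),x_i^*\}>0$ for all $t\ge 0$.
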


\begin{proof}
The theorem follows straightforwardly from \eqref{eq:pir_x} and Theorem~\ref{th:ne_congestion_game}.
\end{proof}

As an illustrative example, Figure~\ref{fig:phasePlane} (obtained with \cite{Polkingpplane}) shows the convergence of the imitation dynamic of PISAP to the NE of $G$ for a cognitive network of $2$ channels and $50$ SUs.

We then study the convergence of PISAP in the general case with $\epsilon_U>0$. Specifically, we define the \textit{imitation-stable} equilibrium as a state where no further imitations can be conducted based on the imitation policy~\cite{Ackermann09}. The following theorem analyzes the convergence of PISAP with respect to this concept.

\begin{theorem}
\label{th:convergence_simple}
PISAP converges to an imitation-stable equilibrium in expected $O(\frac{N^2}{\mu_{min}\sigma\epsilon_U})$ iterations where $\mu_{min}\triangleq  \min_{i \in {\cal C}} \mu_i$. The converged equilibrium is an $\epsilon$-NE of $G$ with $\epsilon=2\epsilon_U$.
\end{theorem}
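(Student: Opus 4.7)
The plan is to split Theorem~\ref{th:convergence_simple} into two essentially independent claims---the $2\epsilon_U$-NE characterization of any imitation-stable state, and the $O(N^2/(\mu_{\min}\sigma\epsilon_U))$ bound on the expected hitting time---and treat each with its own argument.

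For the $\epsilon$-NE claim I would read off the structure of an imitation-stable state directly from the stopping condition of PISAP: an imitation is triggered whenever $U_{j'}-U_j>\epsilon_U$, so stability is equivalent to $|U_j-U_{j'}|\le\epsilon_U$ for every ordered pair, i.e., $|\mu_i/n_i-\mu_{i'}/n_{i'}|\le\epsilon_U$ on any two occupied channels $i,i'$. To upgrade this to a $2\epsilon_U$-NE I would bound the gain of every unilateral deviation. For a deviation from channel $i$ to an occupied channel $i'$, the new payoff is $\mu_{i'}/(n_{i'}+1)<\mu_{i'}/n_{i'}\le \mu_i/n_i+\epsilon_U$, so the gain is strictly below $\epsilon_U$; the extra $\epsilon_U$ slack absorbs the case of a deviation to a channel that was emptied during the dynamic, by comparing $\mu_{i'}$ with the per-SU payoff of the channel from which the last SU left $i'$ (which, by the stopping condition at the time it left, was itself within $\epsilon_U$ of $\mu_{i'}$).

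For the convergence-time claim I would use the congestion-game potential $P(x)=\sum_{i\in\mathcal{C}}(\mu_i/N)\log(x_iN)$ from Theorem~\ref{th:ne_congestion_game} as a Lyapunov function. In any non-stable state, some ordered pair $(j,j')$ satisfies $U_{j'}-U_j>\epsilon_U$, and the probability that in a single iteration this particular SU $j$ selects exactly $j'$ and the proportional imitation test succeeds is at least $\tfrac{1}{N}\cdot\tfrac{1}{N-1}\cdot\sigma\epsilon_U=\Omega(\sigma\epsilon_U/N^2)$. A direct before/after computation of $P$ on a single SU migration shows that the realized increment is lower bounded by a quantity of order $\mu_{\min}\epsilon_U/N$, with the $\mu_{\min}$ factor arising from the smallest coefficient of the logarithmic terms. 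Since $P$ is monotonically non-decreasing in expectation along PISAP and has range bounded uniformly in $N$ and $\epsilon_U$, dividing the total range of $P$ by the expected per-iteration increment yields the claimed $O(N^2/(\mu_{\min}\sigma\epsilon_U))$ hitting-time bound; the same monotonicity ensures that the process reaches an imitation-stable state with probability one.

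The step I expect to be the main obstacle is pinning down the exact $\mu_{\min}$ dependence in the denominator. A loose accounting that simply counts imitation opportunities without matching each to a potential jump of the right order tends to replace $\mu_{\min}^{-1}$ with a state-dependent expression, because in a generic imbalanced configuration the increment of $P$ scales with some channel coefficient $\mu_i$ that need not equal $\mu_{\min}$. To recover the stated form I would identify, in each non-stable configuration, a canonical imbalanced pair whose swap is guaranteed to yield a potential jump $\Omega(\mu_{\min}\epsilon_U/N)$---naturally chosen by targeting the least-loaded channel whose deviation from $x_i^\ast$ is positive---and then cast the overall analysis as a submartingale stopping-time argument against that canonical pair, so that the worst-case rather than the typical increment controls the hitting time.
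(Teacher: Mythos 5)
Your potential-function route is genuinely different from the paper's argument, which never introduces a Lyapunov function: the paper tracks the worst per-channel payoff $\pi_{i^{min}}(t)=\mu_{i^{min}}/(Nx_{i^{min}}(t))$ directly, shows that the expected aggregate outflow from channel $i^{min}$ satisfies $-\mathbb{E}[\Delta x_{i^{min}}(t)]>x_{i^{min}}(t)\sigma\epsilon_U/N$ whenever the payoff spread exceeds $\epsilon_U$ (summing over \emph{all} SUs on that channel, not just one distinguished pair), converts this into a per-iteration gain $\pi_{i^{min}}(t+1)-\pi_{i^{min}}(t)>\mu_{i^{min}}\sigma\epsilon_U/N^2$, and concludes from $\pi_i\in[0,1]$. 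Your version has two concrete problems. First, the arithmetic does not produce the stated bound: with a success probability of $\Omega(\sigma\epsilon_U/N^2)$ per iteration and an increment of $\Omega(\mu_{min}\epsilon_U/N)$ per success, the expected drift is $\Omega(\mu_{min}\sigma\epsilon_U^2/N^3)$, and dividing the range of $P$ (which does not depend on $\epsilon_U$) by this drift leaves an extra factor of $1/\epsilon_U$, plus the range itself, beyond $O(N^2/(\mu_{min}\sigma\epsilon_U))$; you flag the $\mu_{min}$ dependence as the obstacle, but the $\epsilon_U$ mismatch is just as real. Second, and more fundamentally, the assertion that $P$ is non-decreasing in expectation is not justified and fails migration-by-migration: a move from $i$ to $i'$ triggered by $\mu_{i'}/n_{i'}-\mu_i/n_i>\epsilon_U$ changes $P$ by $\mu_{i'}\log\frac{n_{i'}+1}{n_{i'}}-\mu_i\log\frac{n_i}{n_i-1}$, which can be negative when $n_{i'}$ is small because the imitation test compares \emph{pre}-migration payoffs and the mover can overshoot; moreover all $N$ SUs update concurrently, so many can pile onto the same channel in one iteration. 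Establishing positive expected drift of $P$ would require exactly the kind of aggregate expected-flow computation the paper already performs on $x_{i^{min}}$, at which point the potential function buys you nothing.

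On the $\epsilon$-NE half, your direct deviation bound for occupied target channels is fine (and arguably cleaner than the paper's contradiction argument, which compares the converged payoffs to the NE profile and derives $\sum_i x_i(\infty)<\sum_i x_i^*$). But the emptied-channel case is a real hole: under pure imitation a channel that loses its last SU can never be repopulated, a deviation to it yields the full payoff $\mu_{i'}$, and your proposed fix---comparing $\mu_{i'}$ to payoffs at the historical moment the last SU left---does not close, because the maximal per-channel payoff can decrease substantially between that moment and convergence, so $\mu_{i'}$ may exceed every converged payoff by far more than $2\epsilon_U$. The paper's own proof silently assumes $x_i(\infty)>0$ for all $i$, so this gap is shared, but your sketch surfaces it without repairing it.
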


\begin{proof}
We provide the sketch of the proof here. The detailed proof is provided in the Appendix.

We first prove the convergence of PISAP to an imitation-stable equilibrium. Define $i^{max}\triangleq\argmax_{i\in{\cal C}} \pi_i(t)$ and $i^{min}\triangleq\argmin_{i\in{\cal C}} \pi_i(t)$, we show that for each iteration $t$, if $\pi_{i^{max}}(t)-\pi_{i^{min}}(t)>\epsilon_U$, then at least one of the following holds
\begin{eqnarray*}
\begin{cases}
\pi_{i^{min}}(t+1)-\pi_{i^{min}}(t) \sim O(\frac{\mu_{min}\sigma\epsilon_U}{N^2}) \\
\pi_{i^{max}}(t)-\pi_{i^{max}}(t+1) \sim O(\frac{\mu_{max}\sigma\epsilon_U}{N^2})
\end{cases}.
\end{eqnarray*}
I.e., if the difference between the highest expected individual payoff $\pi_{i^{max}}(t)$ and the worst one $\pi_{i^{min}}(t)$ is larger than $\epsilon_U$, we can at least increase $\pi_{i^{min}}(t)$ by $O(\frac{\mu_{min}\sigma\epsilon_U}{N^2})$ or decrease $\pi_{i^{max}}(t)$ by $O(\frac{\mu_{max}\sigma\epsilon_U}{N^2})$. It follows that after a finite number of iterations (the exact form is deduced in the detailed proof), PISAP converges to a state where $\pi_{i^{max}}(t)-\pi_{i^{min}}(t)\le\epsilon_U$, which is imitation-stable. We then show by contradiction that the converged imitation-stable equilibrium is an $\epsilon$-Nash of $G$ with $\epsilon=2\epsilon_U$.
\end{proof}

Note that the convergence delay $O(\frac{N^2}{\mu_{min}\sigma\epsilon_U})$ derived in Theorem~\ref{th:convergence_simple} consists of the upper bound and through the simulations we conduct, we observe that the convergence is achieved in a much shorter delay.

\subsection{Spectrum Access Policy Based on Double Imitation}

In this subsection, we turn to a more advanced imitation rule, the double imitation rule~\cite{Schlag99} and propose the DI-based spectrum access policy, termed as DISAP. Under DISAP, each SU randomly samples two SUs and imitates them with a certain probability determined by the utility difference. The spectrum access policy based on the double imitation is detailed in Algorithm~\ref{algo:apir}, in which each SUs randomly samples two other SUs $j_1$ and $j_2$ (without loss of generality, assume that $j_1$ and $j_2$ operate on channel $i_1$ and $i_2$ respectively, with corresponding utilities $U_{j_1}\le U_{j_2}$) and updates the probabilities of switching to channels $i_1$ and $i_2$, denoted as $p_{j_1}$ and $p_{j_2}$ respectively.

The double imitation rule generates an aggregate monotone dynamic~\cite{Schlag99,Sam92}, which is defined as follows:
\begin{equation}
\dot x_i = \frac{x_i}{\omega-\alpha}\left[1+\frac{\omega-\overline{\pi}}{\omega-\alpha}\right](\pi_i-\overline{\pi}) \quad \forall i \in {\cal C}
\label{eq:aggrDyn}
\end{equation}


Injecting $\pi_i=\mu_i/(x_iN)$ into the differential equations, we have:
\begin{equation*}
\dot x_i = \frac{\sigma\overline{\pi}}{\omega-\alpha}\left(1+\frac{\omega-\overline{\pi}}{\omega-\alpha}\right)-\frac{\sigma\overline{\pi}}{\omega-\alpha}\left(1+\frac{\omega-\overline{\pi}}{\omega-\alpha}\right)x_i,
\end{equation*}
whose solution is
\begin{equation}
x_i(t) = Ke^{-\frac{\sigma\overline{\pi}}{\omega-\alpha}\left(1+\frac{\omega-\overline{\pi}}{\omega-\alpha}\right)t}+\frac{\mu_i}{\sum_{l\in {\cal C}} \mu_l},
\label{eq:di_x}
\end{equation}
where $\overline{\pi}=\sum_{l\in {\cal C}} \mu_l/N$ and
$K = x_i(0)-\frac{\mu_i}{\sum_{l\in {\cal C}} \mu_l}$. In the studied scenario, $\alpha$ and $\omega$ are the lower and upper bound of the SUs' utility, which are $0$ and $1$, respectively.

The following theorem stating the major result in this subsection follows immediately.

\begin{theorem}
DISAP converges exponentially to the NE of the spectrum access game $G$.
\end{theorem}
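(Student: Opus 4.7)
The plan is to leverage the closed-form solution \eqref{eq:di_x} already derived for the aggregate monotone dynamic induced by DISAP, together with the characterization of the NE from Theorem~\ref{th:ne_congestion_game}. Essentially the theorem is a direct corollary of these two ingredients, so my proof would be very short and would amount to verifying that (i) the decay coefficient of the exponential term in \eqref{eq:di_x} is a strictly positive constant and (ii) the equilibrium point singled out by the closed-form solution coincides with the NE of $G$.

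First, I would observe that the population-averaged payoff in this model is in fact state-independent: substituting $\pi_i = \mu_i/(x_i N)$ into $\overline{\pi} = \sum_i x_i \pi_i$ yields $\overline{\pi} = \sum_{l\in\cal C}\mu_l/N$, a constant determined purely by the channel statistics. This is the key simplification that makes \eqref{eq:di_x} a genuine solution of \eqref{eq:aggrDyn} (rather than an implicit expression) and that makes the decay rate constant along trajectories.

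Second, I would check that the exponent's coefficient
\begin{equation*}
\lambda \triangleq \frac{\sigma\overline{\pi}}{\omega-\alpha}\left(1+\frac{\omega-\overline{\pi}}{\omega-\alpha}\right)
\end{equation*}
is strictly positive. Since $\sigma>0$, $\overline{\pi}>0$, and $\overline{\pi}\in[\alpha,\omega]$, the bracketed factor lies in $[1,2]$, so $\lambda>0$. Consequently $e^{-\lambda t}\to 0$ exponentially fast, and passing to the limit in \eqref{eq:di_x} gives $x_i(t)\to \mu_i/\sum_{l\in\cal C}\mu_l$. By Theorem~\ref{th:ne_congestion_game}, this limit is exactly $x_i^*$, the unique NE of $G$. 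The exponential rate of convergence follows immediately from the explicit form of \eqref{eq:di_x} with rate $\lambda$.

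I do not foresee a real obstacle here, since the bulk of the work (solving the ODE system and identifying the NE) has already been done in the preceding material; the only subtlety worth flagging explicitly is the state-independence of $\overline{\pi}$, which is what guarantees that the ``constant'' $K$ in \eqref{eq:di_x} is actually constant along trajectories and hence that the convergence is genuinely exponential rather than merely asymptotic.
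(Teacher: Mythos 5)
Your proposal is correct and follows essentially the same route as the paper, which treats this theorem as an immediate consequence of the closed-form solution \eqref{eq:di_x} and Theorem~\ref{th:ne_congestion_game} (mirroring the proof of Theorem~\ref{th:simple} for PISAP). Your explicit observation that $\overline{\pi}=\sum_{l\in{\cal C}}\mu_l/N$ is state-independent, which is what makes \eqref{eq:di_x} a genuine solution with a constant exponential rate, is a worthwhile clarification of a step the paper leaves implicit.
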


Compared with the proportional imitation rule, which produces the replicator dynamic (Eq.~\eqref{eq:pir_dynamic}), the adjusted proportional imitation rule induces the aggregate monotone
dynamic (Eq.~\eqref{eq:aggrDyn}) that converges to the NE at a higher rate. In Fig.~\ref{fig:phasePlane} is shown the phase plane for replicator dynamics and aggregate monotone dynamics. As proven in Theorem~\ref{th:ne_congestion_game}
for large $N$ there exists only one attractor (NE), which is the crossing point of the two nulclines (dashed lines).


We then study the convergence to an imitation-stable equilibrium of DISAP in the general case with $\epsilon_U>0$ in the following theorem.

\begin{theorem}
\label{th:convergence_double}
DISAP converges to an imitation-stable equilibrium in expected $O(\frac{N^2}{\mu_{min}\sigma\epsilon_U})$ iterations where $\quad$ $\mu_{min}\triangleq  \min_{i \in {\cal C}} \mu_i$. The converged equilibrium is an $\epsilon$-NE of $G$ with $\epsilon=2\epsilon_U$.
\end{theorem}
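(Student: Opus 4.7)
The plan is to mirror the sketch of Theorem~\ref{th:convergence_simple}: track the gap $\pi_{i^{max}}(t) - \pi_{i^{min}}(t)$ between the channels of highest and lowest per-SU expected payoffs, and show that whenever this gap exceeds $\epsilon_U$, one iteration of DISAP produces a quantifiable drift shrinking it. What must change from the PISAP argument is only the per-iteration probability of a productive migration, because DISAP samples two SUs and its switching probabilities $p_{j_1}, p_{j_2}$ are weighted by the $Q$ factors in Algorithm~\ref{algo:apir} rather than by a bare payoff difference.

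The first step is to lower bound those switching probabilities. Since $U_r \in [\alpha,\omega]$ for every SU $r$, we have $Q(U_r) \in \left[\tfrac{1}{\omega-\alpha},\tfrac{2}{\omega-\alpha}\right]$, so $Q$ is uniformly bounded below and above by positive constants. Fix a SU $j$ currently on $i^{min}$ and consider the event that both samples $j_1,j_2$ lie on $i^{max}$, which has probability of order $x_{i^{max}}^2$. Conditioned on that event, $U_{j_1}-U_j$ and $U_{j_2}-U_j$ each exceed $\epsilon_U$, and plugging into the formulas for $p_{j_1}$ and $p_{j_2}$ shows that $j$ switches to $i^{max}$ with probability at least $\sigma\epsilon_U/\bigl(2(\omega-\alpha)\bigr)$. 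Summing over the $x_{i^{min}}N$ SUs currently on $i^{min}$ and using $x_{i^{min}}\pi_{i^{min}} = \mu_{i^{min}}/N \geq \mu_{min}/N$, I can lower bound the expected one-step change as
\begin{equation*}
\pi_{i^{min}}(t+1) - \pi_{i^{min}}(t) \;=\; \Omega\!\left(\frac{\mu_{min}\,\sigma\,\epsilon_U}{N^{2}}\right),
\end{equation*}
or obtain the symmetric decrease in $\pi_{i^{max}}(t)$ when bounding the other side is more convenient. Because every $\pi_i$ lies in $[0,1]$, summing these positive increments along the trajectory yields the $O(N^2/(\mu_{min}\sigma\epsilon_U))$ bound on the expected number of iterations until $\pi_{i^{max}}-\pi_{i^{min}} \leq \epsilon_U$, i.e., until the state is imitation-stable.

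The second step is the $\epsilon$-NE claim with $\epsilon = 2\epsilon_U$, for which I would reuse the contradiction argument of Theorem~\ref{th:convergence_simple} verbatim: if some SU $j$ on channel $s_j$ could improve its payoff by more than $2\epsilon_U$ by unilaterally switching to a channel $i'$, then even after the switch the per-SU payoff on $i'$ (which drops by at most a factor corresponding to one additional occupant) would exceed $\pi_{s_j}$ by more than $\epsilon_U$, implying that some SU currently on $i'$ already enjoys a payoff more than $\epsilon_U$ above $\pi_{s_j}$. That pair would be a valid source/target for a DISAP imitation step, contradicting imitation-stability. Since the acceptance threshold $\epsilon_U$ is the same in both algorithms, the same $2\epsilon_U$ bound carries over.

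The main obstacle, as in the previous theorem, is the bookkeeping inside the double-imitation probabilities. DISAP branches on the ordering of $Q(U_j)(U_{j_1}-U_{j_2})$ versus $Q(U_{j_2})(U_j-U_{j_1})$, and the expressions under the $[\cdot]^+$ clippings can partially cancel. The non-routine work is to verify that in the regime relevant to the drift argument (SUs on $i^{min}$ sampling SUs on or near $i^{max}$, with $\pi_{i^{max}}-\pi_{i^{min}}>\epsilon_U$) the aggregate migration probability $p_{j_1}+p_{j_2}$ is bounded below by a positive constant multiple of $\sigma\epsilon_U/(\omega-\alpha)$; once that calculation is pinned down, the remainder collapses to the PISAP template and the stated rate and $\epsilon$-NE bound follow.
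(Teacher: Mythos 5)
Your overall strategy is the right one and matches the paper's: establish a per-iteration drift on the gap $\pi_{i^{max}}(t)-\pi_{i^{min}}(t)$ whenever it exceeds $\epsilon_U$, sum the drift using $\pi_i\in[0,1]$ to get the iteration bound, then argue by contradiction that the imitation-stable state is a $2\epsilon_U$-NE (your version of that second step, via the payoff of a unilateral deviator, is if anything more direct than the paper's, which instead compares $\pi_i(\infty)$ to the NE payoffs and derives $\sum_i x_i(\infty)<\sum_i x_i^*$). The problem is in the one place where you commit to a concrete calculation: you condition on the event that \emph{both} sampled SUs lie on $i^{max}$. That event has probability $x_{i^{max}}^2$, and the only uniform lower bound available is $x_{i^{max}}\ge 1/N$ (indeed $i^{max}=\argmax_i \mu_i/(Nx_i)$ tends to be an \emph{under}populated channel, so $x_{i^{max}}$ really can be of order $1/N$). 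Carrying your own estimates through, the expected outflow from $i^{min}$ is $-\mathbb{E}[\Delta x_{i^{min}}]\gtrsim x_{i^{min}}\,x_{i^{max}}^2\,\sigma\epsilon_U/(\omega-\alpha)$, which yields $\pi_{i^{min}}(t+1)-\pi_{i^{min}}(t)=\Omega\bigl(\mu_{min}\sigma\epsilon_U/N^{3}\bigr)$, not $\Omega\bigl(\mu_{min}\sigma\epsilon_U/N^{2}\bigr)$. Your argument therefore proves only an $O(N^{3}/(\mu_{min}\sigma\epsilon_U))$ convergence time, a factor of $N$ short of the statement.

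The paper avoids this by never conditioning on a quadratically rare sampling event: it works with the aggregate switching probability directly, writing $x^{i^{min}}_l(t+1)=x_{i^{min}}(t)x_l(t)p_l$ for $l$ outside the $\epsilon_U$-worst set ${\cal C}_{\epsilon_U}$ and summing to get the aggregate-monotone form $-\mathbb{E}[\Delta x_{i^{min}}]=\frac{\sigma x_{i^{min}}}{\omega-\alpha}\bigl[1+\frac{\omega-\overline{\pi}'}{\omega-\alpha}\bigr]\sum_l x_l[\pi_l-\pi_{i^{min}}]$, which is \emph{linear} in the sampling proportions; a single channel $l_0$ with $\pi_{l_0}-\pi_{i^{min}}>\epsilon_U$ and $x_{l_0}\ge 1/N$ then gives the needed $\epsilon_U/N$ factor and the $N^2$ rate. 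To repair your proof along your own lines, you would condition instead on the event that \emph{at least one} sample lies on such an $l_0$ (probability $\ge x_{l_0}\ge 1/N$) and then verify, across the three branches of Algorithm~\ref{algo:apir} and the $[\cdot]^+$ clippings, that the conditional migration probability out of $i^{min}$ is still $\Omega(\sigma\epsilon_U/(\omega-\alpha))$ \emph{uniformly over the location of the other sample}. That is exactly the bookkeeping you flag at the end as "non-routine work" and defer; as it stands, it is not a routine remainder but the step on which the claimed rate depends.
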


\begin{proof}
The proof follows the same analysis as that of Theorem~\ref{th:convergence_simple}. The detail is provided in the Appendix.
\end{proof}

\subsection{Discussion}

As desirable properties, the proposed imitation-based spectrum access policies (both PISAP and DISAP) are stateless, incentive-compatible for selfish autonomous SUs and requires no central computational unit. The spectrum assignment is achieved by local interactions among autonomous SUs and the $\epsilon$-optimum of the system is achieved when the algorithm converges, which is achieved in polynomial time. The autonomous behavior and decentralized implementation make the proposed policies especially suitable for large scale cognitive radio networks. The imitation factor $\sigma$ controls the tradeoff between the convergence speed and the channel switching frequency in that larger $\sigma$ represents more aggressiveness in imitation and thus leads to fast convergence, at the price of more frequent channel switching for the SUs which may consist of significant cost for today's wireless devices in terms of delay, packet loss and protocol overhead. The imitation threshold $\epsilon_U$, on the other hand, can be tuned to balance between the convergence speed and the optimality of the converged equilibrium. 

\section{Imitation on the same channel}
\label{sec:NewScenario}



Up to now, we have studied the imitation-based channel access policy where a SU can imitate any other SU whatever the channel the latter stays in. This approach implicitly assumes that a  SU can interact with SUs on different channels, which may not be realistic in some cases or pose additional system overhead (e.g., sensing a different channel). In this section, we focus on a more practical scenario, where a SU only imitates the SUs on the same channel and the imitation is based on the payoff difference of the precedent iteration. In the considered scenario, a SU only needs to locally interact with the SUs on the same channel (e.g., exchange payoff of the precedent iteration, which can be piggybacked with the data packets transmitted on the channel).


In the sequel analysis, we first study the induced imitation dynamic and the convergence of the proposed spectrum access policies PISAP and DISAP subject to channel constraint on imitation.

\subsection{Imitation Dynamic and Convergence}
\label{subsec:dynamicsScenario2}


In this subsection, we first derive in Theorem~\ref{th:finitePop} the dynamic for a generic imitation rule $F$ with large population. We then derive in Lemma~\ref{lemma:pop2}, Theorem~\ref{th:pop2} and Theorem~\ref{th:convergence_pi_2} the dynamic of the proposed proportional imitation policy PISAP and its convergence under the channel constraint. The counterpart analysis for the double imitation policy DISAP is explored in Lemma~\ref{lemma:pop3}, Theorem~\ref{th:pop3} and Theorem~\ref{th:convergence_di_2}.




We start by introducing the notations used in our analysis. At an iteration, we label all SUs performing strategy $i$ (channel $i$ in our case) as SUs of type $i$ and we refer to the SUs on $s_j$ as neighbors of SU~$j$. We denote $n_{i}^l(t)$ the number of SUs on channel $i$ at iteration $t$ and operating on channel $l$ at $t-1$. It holds that $\sum_{l\in{\cal C}} n_i^l(t) = n_i(t)$ and $\sum_{i\in{\cal C}} n_i^l(t) = n_l(t-1)$. For a given state $s(t)\triangleq\{s_j(t), j\in{\cal C}\}$ at iteration $t$ and a finite population of size $N$, we denote
$p_i(t)\triangleq n_i(t)/N$ the proportion of SUs of type $i$ and $p_{i}^l(t)\triangleq n_i^l(t)/N$ the proportion of SUs migrating from channel $l$ to $i$. We use $x$ instead of $p$ to denote these proportions in asymptotic case. It holds that $p\rightarrow x$ when $N\rightarrow +\infty$.


In our study, a generic imitation rule under the channel constraint is termed as $F$. In the case of the proportional imitation rule (c.f. PISAP), $F$ is characterized by the probability set
$\{F_{j,k}^i\}$ where $F_{j,k}^i$ denotes the probability that a SU choosing strategy $j$ at the precedent iteration imitates another SU choosing strategy $k$ at the precedent iteration and
then switches to channel $i$ at next iteration after imitation. Instead, by applying the double imitation rule (c.f. DISAP), we can characterize $F$ by the probability set $\{F_{j,\{k,l\} }^i\}$
where $F_{j,\{k,l\} }^i$ denotes the probability that a SU choosing strategy $j$ at the precedent iteration imitates two neighbors choosing respectively strategy $k$ and strategy $l$ at the
precedent iteration and then switches to channel $i$
at next iteration after imitation. In both cases the only way to switch to a channel $i$ is to imitate a SU that was on channel $i$. That means $F_{j,k}^i=0$, $\forall k\ne i$ (PISAP)
and $F_{j,\{k,l\} }^i=0$, $\forall k,l\ne i$ (DISAP). \\
At the initialization phase (iteration $0$ and $1$), each SU randomly chooses its strategy. After that, the system state at iteration $t+1$, denoted
as $\mathbf{p(t+1)}$ ($\mathbf{x(t+1)}$ in the asymptotic case), depends on the states at iteration $t$ and $t-1$.


\begin{theorem}
For any imitation rule $F$, if the imitation among SUs of the same type occurs randomly and independently, then $\forall \delta>0$, $\epsilon>0$ and any initial state $\{\widetilde{x}_i(0)\}$, $\{\widetilde{x}_i(1)\}$, there exists $N_{0}\in\mathbb{N}$
such that if $N>N_{0}$, $\forall i\in{\cal C}$, the event $|p_i(t)-x_i(t)|>\delta$ occurs with probability less than $\epsilon$, where
$p_i(0)=x_i(0)=\widetilde{x}_i(0)$, $p_i(1)=x_i(1)=\widetilde{x}_i(1)$. In the case of proportional imitation policy it holds that
\begin{equation*}
x_i(t+1)=\sum_{j,l,k \in \mathcal{C}}  \frac{x_j^l(t)x_j^k(t)}{x_j(t)}F_{l,k}^{i} \quad \forall i\in \mathcal{C}
\end{equation*}
Differently, the double imitation policy yields:
\begin{equation*}
x_i(t+1)=\sum_{j,l,k,z \in \mathcal{C}}  \frac{x_j^l(t)x_j^k(t)x_j^z(t)}{[x_j(t)]^2}F_{l,\{k,z\} }^{i} \quad \forall i\in \mathcal{C}
\end{equation*}

\label{th:finitePop}
\end{theorem}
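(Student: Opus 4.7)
The plan is an induction on $t$ combined with a Hoeffding-type concentration estimate at each one-step transition. The base cases $t=0,1$ are given by assumption. For the inductive step, the statement I would propagate is slightly stronger than the theorem claims: every pair-proportion $p_j^l(t)$ lies within some tolerance $\delta_t$ of $x_j^l(t)$ with probability at least $1-\epsilon_t$. Since the horizon is a fixed finite $t$, chaining the tolerances across a constant number of iterations and taking a union bound over the finite index set $\mathcal{C}\times\mathcal{C}$ only inflates the failure probability by a constant factor, which lets one choose $N_0$ in terms of $(\delta,\epsilon,t)$.

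First I would identify, for each SU $m$ and each pair $(i,l)\in\mathcal{C}\times\mathcal{C}$, the Bernoulli indicator $Y_m^{i,l}=\mathbf{1}\{s_m(t+1)=i,\,s_m(t)=l\}$, so that $p_i^l(t+1)=N^{-1}\sum_m Y_m^{i,l}$. Conditioning on the entire history through iteration $t$ (which fixes all $n_j^l(t)=Np_j^l(t)$), the independence assumption on the imitation draws makes the collection $\{Y_m^{i,l}\}_m$ conditionally independent. For a SU of type $(j,l')=(s_m(t),s_m(t-1))$, under the PI rule the conditional probability that this SU draws a neighbor whose precedent channel was $k$ and then imitates to channel $i$ equals
\begin{equation*}
\sum_{k\in\mathcal{C}}\frac{n_j^k(t)-\mathbf{1}\{k=l'\}}{n_j(t)-1}\,F_{l',k}^{i}=\sum_{k\in\mathcal{C}}\frac{p_j^k(t)}{p_j(t)}F_{l',k}^{i}+O(1/N).
\end{equation*}
Summing over $m$, grouping by $(j,l')$, and dividing by $N$ yields
\begin{equation*}
\mathbb{E}\bigl[p_i(t+1)\mid\mathrm{hist}\bigr]=\sum_{j,l',k\in\mathcal{C}}\frac{p_j^{l'}(t)\,p_j^k(t)}{p_j(t)}F_{l',k}^{i}+O(1/N),
\end{equation*}
which is exactly the PISAP recurrence evaluated at $p(t)$. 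The DISAP case is identical once the joint distribution of the two essentially independent neighbor draws is written out, producing the factor $p_j^{l'}(t)p_j^k(t)p_j^z(t)/p_j(t)^2$ and the kernel $F_{l',\{k,z\}}^{i}$.

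The concentration half is Hoeffding's inequality applied to the conditionally independent Bernoulli average $p_i^l(t+1)$:
\begin{equation*}
\Pr\!\bigl[\,|p_i^l(t+1)-\mathbb{E}[p_i^l(t+1)\mid\mathrm{hist}]|>\eta\,\bigr]\le 2\exp(-2N\eta^2).
\end{equation*}
Combined with the induction hypothesis $|p_j^l(t)-x_j^l(t)|\le\delta_t$ and the local Lipschitz continuity of the deterministic recurrence at $x(t)$, this delivers $|p_i^l(t+1)-x_i^l(t+1)|\le\delta_{t+1}$ with the required probability once $N$ is large enough. Marginalising over $l$ then produces the bound on $|p_i(t+1)-x_i(t+1)|$ claimed by the theorem.

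The main obstacle I anticipate is the Lipschitz control of the update map near states where some $x_j(t)$ is small, because of the $p_j(t)$ and $p_j(t)^2$ denominators. When $x_j(t)=0$ the numerators $x_j^{l'}(t)x_j^k(t)$ (and $x_j^{l'}(t)x_j^k(t)x_j^z(t)$ in DISAP) vanish as well, so the summand should be interpreted as zero by convention; the residual subtlety is to verify that empirical channels of size $O(1)$ perturb $p_i(t+1)$ by at most $O(1/N)$ and are therefore harmlessly absorbed into the Hoeffding slack, which holds because at most $n_j(t)$ SUs can route their switch through such a channel. The same remark absorbs the $O(1/N)$ corrections coming from self-exclusion in sampling and, for DISAP, from the mild coupling between the two neighbor draws. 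Beyond these two technical points the argument is a routine chaining of finitely many concentration inequalities across a fixed horizon.
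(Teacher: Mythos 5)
Your proposal is correct and follows essentially the same route as the paper's proof: per-SU transition indicators, a conditional-expectation computation that recovers the deterministic recurrence, a concentration bound on the empirical proportions, and induction over the fixed finite horizon. The only differences are cosmetic: the paper allows self-imitation so its one-step expectation is exact rather than exact up to $O(1/N)$, it uses Chebyshev's inequality (variance at most $C/N$) where you use Hoeffding, and it delegates the $t\ge 3$ chaining step --- including the small-denominator subtlety you rightly flag --- to Schlag's original argument rather than spelling it out.
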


\begin{proof}
The proof consists of first showing the theorem holds for iteration $t=2$ and then proving the case $t\ge 3$ by induction. The detail is in Appendix.
\end{proof}

Theorem~\ref{th:finitePop} is an important result on the short run adjustments of large populations under any generic imitation rule $F$: the probability that the behavior
of a large population differs from the one of an infinite population is arbitrarily small when $N$ is sufficiently large. In what follows, we study the convergence of PISAP and DISAP under the channel constraint.

\noindent \textbf{(1) Spectrum access policy PISAP under channel constraint}

We now focus on PISAP under channel constraint and derive the induced imitation dynamic by setting $\epsilon_U=0$ in the following analysis.

\begin{lemma}
On the proportional imitation policy PISAP under channel constraint, it holds that
\begin{equation}
x_i^j(t+1) = \sum_{l,k \in \mathcal{C}}  \frac{x_j^l(t)x_j^k(t)}{x_j(t)}F_{l,k}^{i} \quad \forall i,j\in \mathcal{C}.
\label{xijlemma}
\end{equation}
\label{lemma:pop2}
\end{lemma}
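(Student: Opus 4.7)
The plan is to view Lemma~\ref{lemma:pop2} as the channel-stratified version of Theorem~\ref{th:finitePop}: the theorem already writes $x_i(t+1)$ as a sum of flows coming from every source channel $j$, and the lemma claims that the $j$-th summand coincides exactly with $x_i^j(t+1)$. Since $\sum_{j\in\mathcal{C}} x_i^j(t+1)=x_i(t+1)$ by definition, summing \eqref{xijlemma} over $j$ reproduces the PISAP formula in Theorem~\ref{th:finitePop} as a consistency check. The approach is therefore to re-run the same large-population argument that justifies Theorem~\ref{th:finitePop}, but restricted to the sub-population of SUs currently on channel $j$ rather than to the whole network.

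Concretely, I would fix $i,j\in\mathcal{C}$ with $x_j(t)>0$; the degenerate case $x_j(t)=0$ is trivial, since no SU can migrate out of an empty channel and both sides of \eqref{xijlemma} vanish. Focus on the sub-population of $n_j(t)=x_j(t)N$ SUs that are on channel $j$ at iteration $t$. By definition of $x_j^l(t)$, the fraction of this sub-population whose previous channel (at $t-1$) is $l$ equals $x_j^l(t)/x_j(t)$. The channel constraint forces every sampling performed by a SU currently on $j$ to stay within this same sub-population; hence, as $N\to\infty$, the random pair (imitator's previous channel, sampled neighbor's previous channel) converges in distribution to the product measure with marginals $x_j^l(t)/x_j(t)$ and $x_j^k(t)/x_j(t)$. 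This factorization, which is the only non-trivial probabilistic ingredient, is exactly the law-of-large-numbers step already carried out in the proof of Theorem~\ref{th:finitePop}, now applied to a sub-population of size $x_j(t)N$ that likewise diverges.

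Plugging in the PISAP rule, a SU on $j$ with previous channel $l$ that samples a neighbor with previous channel $k$ migrates to $i$ with probability $F_{l,k}^{i}$. Summing over the two previous-channel labels yields the conditional migration probability $\sum_{l,k\in\mathcal{C}}\frac{x_j^l(t)}{x_j(t)}\cdot\frac{x_j^k(t)}{x_j(t)}\,F_{l,k}^{i}$ for a generic SU currently on $j$, and multiplying by the weight $x_j(t)$ of the sub-population produces $x_i^j(t+1)=\sum_{l,k\in\mathcal{C}}\frac{x_j^l(t)x_j^k(t)}{x_j(t)}F_{l,k}^{i}$, which is \eqref{xijlemma}.

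The main (and essentially only) obstacle is justifying the asymptotic independence of "own previous channel" and "sampled neighbor's previous channel"; this is precisely the content of Theorem~\ref{th:finitePop} restricted to a single current channel, so I would invoke it rather than reproduce its concentration argument. A minor technicality is that the sub-population size $x_j(t)N$ must tend to infinity, but this holds automatically on the support of $x(t)$, and channels outside the support contribute nothing to either side of the identity.
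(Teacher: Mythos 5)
Your proposal is correct and follows essentially the same route as the paper: the paper's proof of Lemma~\ref{lemma:pop2} simply points back to the proof of Theorem~\ref{th:finitePop}, whose intermediate computation of $\mathbb{E}[p_i^j(2)]=\sum_{l,k\in\mathcal{C}}\tilde{x}_j^l(1)\tilde{x}_j^k(1)F_{l,k}^{i}/\tilde{x}_j(1)$ (before summing over $j$) is precisely the claimed identity, and your sub-population argument reproduces that computation together with the same appeal to the concentration step for the passage from $p$ to $x$.
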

\begin{proof}
The proof is straightforward from the analysis in the proof of Theorem~\ref{th:finitePop}.
\end{proof}



\begin{theorem}
The proportional imitation policy PISAP under channel constraint generates the following dynamic in the asymptotic case
\begin{equation}
\label{eq:pop2}
x_i(t+1)=x_i(t-1)+\sigma\pi_i(t-1)x_i(t-1)-\sigma\sum_{j,l\in\mathcal{C}}\pi_l(t-1)\frac{x_j^i(t)x_j^l(t)}{x_j(t)}
\end{equation}
where $\pi_i(t)$ denotes the expected payoff of an individual SU on channel $i$ at iteration $t$.
\label{th:pop2}
\end{theorem}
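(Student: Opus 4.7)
The plan is to start from Lemma~\ref{lemma:pop2}, which decomposes $x_i^j(t+1)$ as a weighted sum over sampler--sampled previous-channel pairs $(l,k)$; summing over $j\in\mathcal{C}$ immediately gives
\[
x_i(t+1) \;=\; \sum_{j,l,k\in\mathcal{C}} \frac{x_j^l(t)\,x_j^k(t)}{x_j(t)}\,F_{l,k}^{i},
\]
so the remaining work splits into (a) writing $F_{l,k}^{i}$ explicitly for PISAP under the channel constraint (with $\epsilon_U=0$), and (b) collapsing the resulting triple sum to the target form.

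For step (a), under the proportional imitation rule a SU that was on $l$ at $t-1$ samples a neighbor on its current channel who was on $k$ at $t-1$; with probability $\sigma[\pi_k(t-1)-\pi_l(t-1)]^+$ it imitates (moving to $k$ at $t+1$) and otherwise retains its earlier strategy $l$. This yields the only non-zero entries $F_{l,l}^{l}=1$ and, for $k\neq l$, $F_{l,k}^{k}=\sigma[\pi_k-\pi_l]^+$ together with $F_{l,k}^{l}=1-\sigma[\pi_k-\pi_l]^+$, where I suppress the $(t-1)$ argument throughout. In particular $F_{l,l}^{i}=0$ for $i\neq l$, which is precisely what will let the diagonal $l=k$ contributions fold cleanly into $x_i(t-1)$ below.

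For step (b), I would partition the triple sum into its surviving cases ($l=i=k$, $l=i\neq k$, and $k=i\neq l$; all remaining combinations vanish) and then apply two elementary simplifications. The identity $[a]^+-[-a]^+=a$ merges the ``gain'' contribution (from $l\neq i=k$) and the ``loss'' contribution (from $l=i\neq k$) into a single signed payoff factor $\sigma(\pi_i-\pi_k)$, eliminating the $[\cdot]^+$. The marginals $\sum_{k\in\mathcal{C}} x_j^k(t)=x_j(t)$ and $\sum_{j\in\mathcal{C}} x_j^i(t)=x_i(t-1)$ then consolidate the residual sums: the constant piece $F_{i,k}^i=1$ (from the loss case) combined with $F_{l,l}^l=1$ produces the bare $x_i(t-1)$ term; adding back the $k=i$ summand---whose new contribution cancels its twin from the gain side---completes the inner $k$-sum to all of $\mathcal{C}$ and produces the $\sigma\pi_i(t-1)x_i(t-1)$ term; and a dummy relabel $k\to l$ in the last sum gives the claimed $-\sigma\sum_{j,l}\pi_l(t-1)\,x_j^i(t)x_j^l(t)/x_j(t)$.

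The main obstacle I anticipate is pure bookkeeping: the overlapping sub-cases must be tracked without double-counting, and one must verify in particular that the diagonal addition used to complete the inner $k$-sum exactly cancels the corresponding $k=i$ piece from the gain branch. A secondary subtlety worth pointing out is that, because $F_{l,k}^i$ depends only on previous-channel labels and the payoff profile, the sum over $j$ commutes with the sum over $(l,k)$; this ordering is essential so that $\sum_j x_j^i(t)=x_i(t-1)$ can be invoked to produce the one-step-delayed $x_i(t-1)$ terms that characterize the statement of the theorem.
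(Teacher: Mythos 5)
Your proposal is correct and follows essentially the same route as the paper: both start from Lemma~\ref{lemma:pop2}, evaluate the triple sum using the explicit form of $F_{l,k}^i$ for proportional imitation, and then invoke the marginal identities $\sum_k x_j^k(t)=x_j(t)$ and $\sum_j x_j^i(t)=x_i(t-1)$ to reach \eqref{eq:pop2}. The only difference is presentational: where you carry out the gain/loss case analysis with the positive parts and the identity $[a]^+-[-a]^+=a$, the paper instead cites Schlag's characterization $F_{l,i}^i=F_{i,l}^l+\sigma[\pi_i(t-1)-\pi_l(t-1)]$ of improving imitating rules to arrive at the same intermediate expression $x_i^j(t+1)=x_j^i(t)+\sigma\sum_{l\in\mathcal{C}}\frac{x_j^l(t)x_j^i(t)}{x_j(t)}[\pi_i(t-1)-\pi_l(t-1)]$.
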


\begin{proof}
It can be shown that the proportional imitation policy PISAP under channel constraint is both {\it imitating}\footnote{A behavior rule is imitating if the switching actions of the rule occur by imitating the sampled individual.} and {\it improving}\footnote{A behavior rule is improving if and only the expected payoff of an individual increases after imitation.}. Apply the analysis in \cite{Schlag96} (Eq.~(10)) to our case, we can characterize $\{F_{l,k}^i\}$ for the proportional imitation rule under channel constraint as:
\begin{equation*}
F_{l,k}^i=
\begin{cases}
0 & k\neq i \\
F_{i,l}^l+\sigma[\pi_i(t-1)-\pi_l(t-1)] & k=i
\end{cases}.
\end{equation*}
In other words, the only possibility to switch to channel $i$ is to imitate a SU that was on channel $i$; the switching probability is proportional to the payoff difference. Noticing that $\sum_l x_j^l(t)F_{i,l}^l/x_j(t)=1$, \eqref{xijlemma} can be written as follows:
\begin{eqnarray*}
x_i^j(t+1) &=& \sum_{l \in \mathcal{C}}  \frac{x_j^l(t)x_j^i(t)}{x_j(t)}F_{l,i}^i = \sum_{l \in \mathcal{C}}  \frac{x_j^l(t)x_j^i(t)}{x_j(t)}F_{i,l}^l+\sum_{l \in \mathcal{C}}  \frac{x_j^l(t)x_j^i(t)}{x_j(t)}\sigma[\pi_i(t-1)-\pi_l(t-1)] \\
           &=& x_j^i(t) + \sum_{l \in \mathcal{C}}  \frac{x_j^l(t)x_j^i(t)}{x_j(t)}\sigma[\pi_i(t-1)-\pi_l(t-1)].
\end{eqnarray*}

Injecting $\sum_j x_j^i(t)=x_i(t-1)$, $\sum_l x_j^l(t)/x_j(t)=1$ and $x_i(t+1)=\sum_j x_i^j(t+1)$ into the above formula, we can obtain \eqref{eq:pop2}, which concludes the proof.
\end{proof}

We observe via extensive numerical experiments that \eqref{eq:pop2} always converges to an evolutionary equilibrium. To get more in-depth insight on the dynamic \eqref{eq:pop2}, we notice that under the following approximation:
\begin{equation}
\sum_{l \in \mathcal{C}} \pi_l(t-1)\frac{x_j^l(t)}{x_j(t)} \approx \bar{\pi}(t-1),
\label{eq:approximation}
\end{equation}
where $\bar{\pi}(t-1)$ is the average individual payoff for the whole system at iteration $t-1$, noticing $\sum_j x_j^i(t)=x_i(t-1)$, \eqref{eq:pop2} can be written as:
\begin{equation}
x_i(t+1)=x_i(t-1)+\sigma x_i(t-1)[\pi_i(t-1)-\bar{\pi}(t-1)].
\label{eq:dynamic1}
\end{equation}
Note that the approximation~\eqref{eq:approximation} states that in any channel $j$ at iteration $t$, the proportions of SUs coming from any channel $l$ are representative of the whole population.

Under the approximation~\eqref{eq:approximation}, given the initial state $\{x_i(0)\}$, $\{x_i(1)\}$, we can decompose~\eqref{eq:dynamic1} into the following two independent discrete-time replicator dynamics:
\begin{equation}
\begin{cases}
x_i(u)=x_i(u-1)+\sigma x_i(u-1)[\pi_i(u-1)-\bar{\pi}(u-1)] \\
x_i(v)=x_i(v-1)+\sigma x_i(v-1)[\pi_i(v-1)-\bar{\pi}(v-1)]
\end{cases}
\label{eq:system_pi}
\end{equation}
where $u=2t$, $v=2t+1$. The two equations in \eqref{eq:system_pi} illustrate the underlying system dynamic hinged behind the proportional imitation policy under channel constraint under the approximation~\eqref{eq:approximation}: it can be decomposed into two independent delayed replicator dynamics that alternatively occur at the odd and even iterations, respectively. The following theorem establishes the convergence of \eqref{eq:system_pi} to a unique fixed point which is also the NE of the spectrum access game $G$.

\begin{theorem}
Starting from any initial point, the system described by \eqref{eq:system_pi} converges to a unique fixed point which is also the NE of the spectrum access game $G$.
\label{th:convergence_pi_2}
\end{theorem}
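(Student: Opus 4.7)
The plan is to exploit the fact that \eqref{eq:system_pi} consists of two independent discrete-time replicator equations sharing identical functional form, and to reduce each to an affine scalar recurrence. First I would substitute the explicit payoff $\pi_i(t)=\mu_i/(x_i(t) N)$ into both lines of \eqref{eq:system_pi}. The key observation is that $x_i(t)\pi_i(t)=\mu_i/N$ does not depend on the current state, so the average payoff $\overline{\pi}(t)=\sum_{l\in{\cal C}} \mu_l/N$ is in fact a \emph{constant} $\overline{\mu}$ along the entire trajectory. Each update of \eqref{eq:system_pi} therefore collapses to the affine map
\begin{equation*}
x_i(t+1) = (1 - \sigma\overline{\mu})\, x_i(t) + \sigma\mu_i/N,
\end{equation*}
which I would analyse separately on the even-indexed and odd-indexed subsequences induced by the decomposition $u=2t$, $v=2t+1$.

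From there the argument reduces to a geometric contraction. Solving the fixed-point equation for the affine map yields
\begin{equation*}
x_i^\star = \frac{\sigma\mu_i/N}{\sigma\overline{\mu}} = \frac{\mu_i}{\sum_{l\in{\cal C}} \mu_l},
\end{equation*}
which coincides exactly with the NE of $G$ identified in Theorem~\ref{th:ne_congestion_game}. Iterating the affine recurrence gives $x_i(2t)-x_i^\star=(1-\sigma\overline{\mu})^t\,(x_i(0)-x_i^\star)$ on the even subsequence and analogously $x_i(2t+1)-x_i^\star=(1-\sigma\overline{\mu})^t\,(x_i(1)-x_i^\star)$ on the odd subsequence. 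Provided $0<\sigma\overline{\mu}<2$ — a condition compatible with the calibration of $\sigma$ in PISAP so that the imitation probabilities $\sigma(U_{j'}-U_j)$ lie in $[0,1]$ — both subsequences converge geometrically to the common limit $x_i^\star$, so the interleaved sequence $\{x_i(t)\}_{t\ge 0}$ converges to $x^\star$ for any initial pair $(x(0),x(1))$.

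The main technical checkpoint is verifying invariance of the probability simplex, without which the substitution $\pi_i(t)=\mu_i/(x_i(t)N)$ is not even well-defined. For mass conservation, summing the replicator update over $i$ gives $\sum_i x_i(t+1)-\sum_i x_i(t)=\sigma\sum_i x_i(t)[\pi_i(t)-\overline{\pi}(t)]=0$, so $\sum_i x_i(t)\equiv 1$. For positivity, writing $x_i(t+1)=x_i(t)\bigl[1+\sigma(\pi_i(t)-\overline{\pi}(t))\bigr]$ and using $\pi_i(t)-\overline{\pi}(t)\ge -\overline{\pi}(t)\ge -1$ under the normalization $U_j\in[0,1]$ shows that $\sigma\le 1$ suffices to keep $x_i(t+1)\ge 0$. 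Once these invariants are in place, uniqueness of the fixed point together with the contraction property deliver both the uniqueness claim and convergence from arbitrary initial conditions on the simplex, thereby concluding the proof. The only genuine obstacle, as I see it, is not the algebra but rather being careful that the affine simplification relies on the identity $x_i\pi_i=\mu_i/N$ holding exactly; this is precisely why the special structure of the evenly shared payoff function $f(\mu_i,n_i)=B\mu_i/n_i$ is decisive for the clean closed form.
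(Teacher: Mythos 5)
Your proposal is correct and follows essentially the same route as the paper's proof: both hinge on substituting $\pi_i=\mu_i/(Nx_i)$ so that each line of \eqref{eq:system_pi} collapses to the decoupled affine recurrence $x_i(t+1)=(1-\sigma\sum_l\mu_l/N)\,x_i(t)+\sigma\mu_i/N$ with fixed point $\mu_i/\sum_l\mu_l$; the paper packages the geometric convergence as a contraction-mapping argument via the Jacobian norm $\|J\|_\infty<1$, whereas you simply unroll the affine map explicitly. Your added check of simplex invariance (mass conservation and positivity under $\sigma\le 1$) is a welcome detail the paper omits.
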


\begin{proof}
The proof, of which the detail is provided in the Appendix, consists of showing that the mapping described by \eqref{eq:system_pi} is a contraction mapping.
\end{proof}

As an illustrative example, Figure~\ref{fig:approximation} shows that the double replicator dynamic provides an accurate approximation of the system dynamic induced by PISAP under channel constraint.

%

Furthermore, performing the same analysis as that of Theorem~\ref{th:convergence_simple}, we can establish the same convergence property on the imitation algorithm under channel constraint under the approximation~\eqref{eq:approximation} for the general case with $\epsilon_U\ge 0$.

\noindent \textbf{(2) Spectrum access policy DISAP under channel constraint}

We then focus on DISAP under channel constraint and derive the induced imitation dynamic.

\begin{lemma}
On the double imitation policy DISAP under channel constraint, it holds that
\begin{equation}
x_i^j(t+1) = \sum_{l,k,z \in \mathcal{C}}  \frac{x_j^l(t)x_j^k(t)x_j^z(t)}{[x_j(t)]^2}F_{l,\{k,z\} }^{i} \quad \forall i,j\in \mathcal{C}.
\label{xijlemma2}
\end{equation}
\label{lemma:pop3}
\end{lemma}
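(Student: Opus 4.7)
The plan is to mirror the proof of Lemma~\ref{lemma:pop2}, which relies on the bookkeeping already established in Theorem~\ref{th:finitePop}. Recall that Theorem~\ref{th:finitePop} derived the aggregate deterministic flow
\begin{equation*}
x_i(t+1) = \sum_{j,l,k,z \in \mathcal{C}} \frac{x_j^l(t)\, x_j^k(t)\, x_j^z(t)}{[x_j(t)]^2}\, F_{l,\{k,z\}}^{i}.
\end{equation*}
To obtain the refined quantity $x_i^j(t+1)$ one restricts attention to the subpopulation currently on channel $j$ at iteration $t$ and simply drops the outer sum over $j$; the sum over $l,k,z$ then already has exactly the form appearing in \eqref{xijlemma2}.

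Concretely, I would carry out the argument in three steps. First, I fix $i,j \in \mathcal{C}$ and write $N\,p_i^j(t+1)$ as the count of SUs that occupy channel $j$ at iteration $t$ and, after one DISAP update, occupy channel $i$ at $t+1$. Second, I decompose this count by conditioning on the triple $(l,k,z) \in \mathcal{C}^3$ consisting of the updating SU's own strategy at $t-1$ and the two past strategies of the two peers it samples on channel $j$. Because DISAP enforces the channel constraint, the two samples are drawn uniformly at random from the $n_j(t)$ users currently on channel $j$; by the empirical-distribution argument used in the proof of Theorem~\ref{th:finitePop}, the joint frequency of the triple $(l,k,z)$ concentrates, as $N \to \infty$, on
\begin{equation*}
x_j^l(t)\cdot \frac{x_j^k(t)}{x_j(t)}\cdot \frac{x_j^z(t)}{x_j(t)} \;=\; \frac{x_j^l(t)\,x_j^k(t)\,x_j^z(t)}{[x_j(t)]^2}.
\end{equation*}
Third, conditional on $(l,k,z)$, the updating SU migrates to channel $i$ with probability $F_{l,\{k,z\}}^{i}$, and by the same law-of-large-numbers argument the total weight of such moves converges in probability to the corresponding summand in \eqref{xijlemma2}. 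Summing over $l,k,z$ with $j$ held fixed yields the stated identity.

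The only technical subtlety is the independence of the two samples, which is what licenses the factorization $x_j^k(t)\,x_j^z(t)/[x_j(t)]^2$. For sampling with replacement this factorization is exact; for sampling without replacement it is exact only up to an $O(1/N)$ correction that vanishes in the asymptotic regime, exactly as handled in Theorem~\ref{th:finitePop}. Apart from this standard concentration step, the derivation is a direct transcription of the bookkeeping of Theorem~\ref{th:finitePop} to the subpopulation currently on channel $j$, so I do not anticipate any obstacle beyond carefully matching indices between the updating SU and its two sampled peers.
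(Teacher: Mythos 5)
Your proposal is correct and follows essentially the same route as the paper: the paper's own proof of Lemma~\ref{lemma:pop3} is precisely a pointer to the double-imitation computation inside the proof of Theorem~\ref{th:finitePop}, where $\mathbb{E}[p_i^j(2)]$ is expanded by conditioning on the triple of past strategies $(l,k,z)$ of the updating SU and its two same-channel samples, yielding the factor $x_j^l x_j^k x_j^z/[x_j(t)]^2$ before the outer sum over $j$ is taken. Your additional remarks on the concentration step and the with-/without-replacement distinction match the Bienaym\'e--Chebychev argument already carried out there, so nothing further is needed.
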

\begin{proof}
The proof is straightforward from the analysis in the proof of Theorem~\ref{th:finitePop}.
\end{proof}



\begin{theorem}
The double imitation policy DISAP under channel constraint generates the following dynamic in the asymptotic case
\begin{multline}
\label{eq:pop3}
x_i(t+1) = x_i(t-1)+2x_i(t-1)\pi_i(t-1)+\sum_jx_j^i(t)\left[ \sum_k\frac{x_j^k(t)}{x_j(t)}\right]^2-2x_j^i(t)\sum_k\frac{x_j^k(t)}{x_j(t)}\pi_k(t-1)\\
-x_j^i(t)\pi_i(t-1)\sum_k\frac{x_j^k(t)}{x_j(t)}\pi_k(t-1)
\end{multline}
where $\pi_i(t)$ denotes the expected payoff of an individual SU on channel $i$ at iteration $t$.
\label{th:pop3}
\end{theorem}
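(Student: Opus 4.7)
The plan is to mirror the strategy used for Theorem~\ref{th:pop2}, now adapted to the triple-sum bookkeeping dictated by Lemma~\ref{lemma:pop3}. The single new ingredient I need is an explicit formula for the transition probabilities $\{F_{l,\{k,z\}}^{i}\}$ induced by the double-imitation rule of DISAP on the same channel; everything else is algebraic rearrangement plus the asymptotic passage already supplied by Theorem~\ref{th:finitePop}.

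First, I would characterize $F_{l,\{k,z\}}^{i}$. Since DISAP can only switch a SU to channel $i$ by imitating a sampled neighbor that was on $i$ at iteration $t-1$, we have $F_{l,\{k,z\}}^{i}=0$ whenever both $k\neq i$ and $z\neq i$. For the three surviving cases ($k=i,z\neq i$; $z=i,k\neq i$; $k=z=i$) I would plug in the explicit switching probabilities $p_{j_1},p_{j_2}$ of Algorithm~\ref{algo:apir} with $\omega=1$, $\alpha=0$, average over which of the two sampled slots carries the target channel $i$, and invoke the analog of Schlag's identity (Eq.~(10) of~\cite{Schlag96}, extended to double sampling in~\cite{Schlag99,Sam92}) that characterizes imitating-and-improving rules. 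This should yield a closed form for $F_{l,\{k,z\}}^{i}$ in terms of $\pi_l(t-1),\pi_k(t-1),\pi_z(t-1)$, of exactly the kind used to derive the aggregate monotone dynamic~\eqref{eq:aggrDyn}.

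Second, I would substitute this closed form into~\eqref{xijlemma2}, split the triple sum over $(l,k,z)$ into the three non-zero sub-cases, and use the symmetry of $F_{l,\{k,z\}}^{i}$ in the pair $\{k,z\}$ to reduce $x_i^j(t+1)$ to a handful of canonical terms, each carrying a factor $x_j^i(t)$ (accounting for the imitated slot being on channel $i$) and at most one residual inner sum of the form $\sum_k \frac{x_j^k(t)}{x_j(t)}\pi_k(t-1)$. Summing over $j$ then aggregates those terms without residual $j$-dependence via $\sum_j x_j^i(t)=x_i(t-1)$, while the terms still involving the conditional proportions $x_j^k(t)/x_j(t)$ keep the outer $\sum_j$ intact; after this collation I expect to recover precisely the right-hand side of~\eqref{eq:pop3}.

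The main obstacle is step one. The DISAP rule defines $p_{j_1}$ and $p_{j_2}$ through a case split on whether $Q(U_j)(U_{j_1}-U_{j_2})$ is smaller or larger than $Q(U_{j_2})(U_j-U_{j_1})$, wrapped in an outer $[\,\cdot\,]^{+}$ clipping. I will have to show that, after averaging over the unordered pair of sampled neighbors and using the sign analysis that makes the clipping vacuous on the relevant configurations, the pair $(p_{j_1},p_{j_2})$ collapses to the clean payoff-difference expression required for $F_{l,\{k,z\}}^{i}$, and that the imitating-and-improving machinery of~\cite{Schlag99,Sam92} transfers verbatim under the same-channel constraint. Once that closed form is secured, steps two and three are routine bookkeeping strictly parallel to the proof of Theorem~\ref{th:pop2}.
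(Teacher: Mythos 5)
Your plan follows the paper's proof in all essentials: characterize $F_{l,\{k,z\}}^{i}$ via Schlag's characterization of unbiased, improving double-imitation rules (with $\omega=1$, $\alpha=0$, so $\sigma(y)=2-y$, exactly the aggregate-monotone setup), substitute into \eqref{xijlemma2}, and aggregate using $\sum_j x_j^i(t)=x_i(t-1)$ and $\sum_l x_j^l(t)/x_j(t)=1$. The ``main obstacle'' you flag --- reconciling the clipped, case-split probabilities of Algorithm~\ref{algo:apir} with the clean payoff-difference form of $F_{l,\{k,z\}}^{i}$ --- is not actually resolved in the paper either, which simply invokes Theorem~1 of \cite{Schlag99} together with the normalization $\sum_{l,k} x_j^l(t)x_j^k(t)F_{k,\{l,i\}}^{i}/x_j(t)^2=1$ rather than re-deriving the rule from the algorithm, so your proposal is if anything slightly more ambitious than the published argument on that point.
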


\begin{proof}
If the rule $F$ is {\it unbiased}\footnote{A behavioural rule is {\it unbiased} if it does not depend on the labelling of actions.}, it follows from \cite{Schlag99} that $F$ is also {\it improving}
and {\it globally efficient}\footnote{A behavioural rule is globally efficient if, for any Multi-armed bandit, all individuals use a best action in the long run, provided that initially
each action is present.}. We can than characterize $\{F_{l,k}^i\}$ as in \cite{Schlag99} (Theorem~1):

\begin{equation*}
F_{l,\{k,i\}}^i=F_{i,\{l,k\}}^k+F_{i,\{k,l\}}^l-F_{k,\{l,i\}}^i-\frac{1}{2}\sigma(\pi_k)(\pi_l-\pi_i)-\frac{1}{2}\sigma(\pi_l)(\pi_k-\pi_i)
\end{equation*}

In other words, the only possibility to switch to channel $i$ is to imitate a SU that was on channel $i$.
Setting $\omega=1$ and $\alpha=0$, so that $\sigma(y)=2-y$, and noticing that $\sum_{l,k} x_j^l(t)x_j^k(t)F_{k,\{ l,i\} }^i/x_j(t)^2=1$, \eqref{xijlemma} can be written as follows:
\begin{equation*}
x_i^j(t+1)=x_j^i(t)+x_j^i(t)\left( 2-\sum_k\frac{x_j^k(t)\pi_k(t-1)}{x_j}\right)\left( \pi_i(t-1)-\sum_k\frac{x_j^k\pi_k(t-1)}{x_j}\right)
\end{equation*}

Injecting $\sum_j x_j^i(t)=x_i(t-1)$, $\sum_l x_j^l(t)/x_j(t)=1$ and $x_i(t+1)=\sum_j x_i^j(t+1)$ into the above formula, we can obtain \eqref{eq:pop3}, which concludes the proof.
\end{proof}

We observe via extensive numerical experiments that \eqref{eq:pop3} always converges to an evolutionary equilibrium and, as shown in Fig,~\ref{fig:DIvsPIR}, also
features a smoother and faster convergence trend with respect to the proportional imitation dynamic (Eq.~\eqref{eq:pop2}).\\
By performing the same approximation as in \eqref{eq:approximation},
\eqref{eq:pop3} can be written as:
\begin{equation}
x_i(t+1)=x_i(t-1)+x_i(t-1)(2-\bar{\pi}(t-1))(\pi_i(t-1)-\bar{\pi}(t-1)).
\label{eq:dynamic}
\end{equation}

Under the approximation~\eqref{eq:approximation}, given the initial state $\{x_i(0)\}$, $\{x_i(1)\}$, we can decompose~\eqref{eq:dynamic} into the following two independent discrete-time aggregate monotone dynamics:
\begin{equation}
\begin{cases}
x_i(u)=x_i(u-1)+ x_i(u-1)[2-\bar{\pi}(u-1)]\cdot[\pi_i(u-1)-\bar{\pi}(u-1)]\\
x_i(v)=x_i(v-1)+ x_i(v-1)[2-\bar{\pi}(v-1)]\cdot[\pi_i(v-1)-\bar{\pi}(v-1)]
\end{cases}
\label{eq:system_di}
\end{equation}
where $u=2t$, $v=2t+1$. The above two equations illustrate the underlying system dynamic hinged behind the double imitation policy under channel constraint under the approximation~\eqref{eq:approximation}: it can be decomposed into two independent delayed aggregate monotone dynamics that alternatively occur at the odd and even iterations, respectively. The following theorem establishes the convergence of \eqref{eq:system_di} to a unique fixed point which is also the NE of the spectrum access game $G$. The proof follows exactly the same analysis as that of Theorem~\ref{th:convergence_pi_2}.

\begin{theorem}
Starting from any initial point, the system described by \eqref{eq:system_di} converges to a unique fixed point which is also the NE of the spectrum access game $G$.
\label{th:convergence_di_2}
\end{theorem}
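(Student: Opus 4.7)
My plan is to follow exactly the strategy of Theorem~\ref{th:convergence_pi_2} and exhibit the one-step map underlying each half of \eqref{eq:system_di} as a contraction on the simplex, whose unique fixed point is the NE of $G$. Since the even-indexed and odd-indexed subsequences in \eqref{eq:system_di} are governed by the same rule, it suffices to analyze a single map $T:\Delta\to\Delta$ defined componentwise by
\begin{equation*}
T(x)_i \;=\; x_i + x_i\bigl[2-\bar{\pi}(x)\bigr]\bigl[\pi_i(x) - \bar{\pi}(x)\bigr],
\end{equation*}
with $\pi_i(x) = \mu_i/(Nx_i)$ and $\bar{\pi}(x) = \sum_l x_l \pi_l(x)$. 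The key observation is that
$\bar{\pi}(x) = \sum_l x_l\cdot\mu_l/(Nx_l) = \tfrac{1}{N}\sum_l \mu_l$
is in fact independent of $x$; call this constant $\bar{\pi}$. This collapses the apparently nonlinear map to an affine one:
\begin{equation*}
T(x)_i \;=\; \beta\, x_i \;+\; (2-\bar{\pi})\,\frac{\mu_i}{N}, \qquad \beta \;\triangleq\; 1-(2-\bar{\pi})\bar{\pi}.
\end{equation*}

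Next I would bound $\beta$. Because $\bar{\pi}=\tfrac{1}{N}\sum_l\mu_l\in(0,1]$ under our assumptions, $(2-\bar{\pi})\bar{\pi}\in(0,1]$ and hence $\beta\in[0,1)$. Thus $T$ is a strict contraction in the sup-norm with modulus $\beta$, and $T$ preserves the simplex: $\sum_i T(x)_i = \beta + (2-\bar{\pi})\bar{\pi} = 1$, while both $\beta x_i$ and $(2-\bar{\pi})\mu_i/N$ are nonnegative so $T(x)_i \geq 0$. By Banach's fixed-point theorem, iterating $T$ from any initial $x\in\Delta$ converges geometrically to a unique fixed point $x^\star$ characterized by $(1-\beta)x^\star_i = (2-\bar{\pi})\mu_i/N$, which simplifies to $x^\star_i = \mu_i/\sum_l \mu_l$. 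This matches the NE of $G$ established in Theorem~\ref{th:ne_congestion_game}. Applying the argument separately to the subsequences $\{x(2t)\}$ and $\{x(2t+1)\}$ in \eqref{eq:system_di} yields joint convergence of $\{x(t)\}$ to the same $x^\star$.

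I do not anticipate a deep obstacle: the decisive simplification is the stationarity of $\bar{\pi}$, which removes the nonlinearity and makes the contraction computation elementary. The only bookkeeping I would be careful about is (i) restricting the state space to the interior of the simplex, so that $\pi_i(x)$ is well defined at every iterate (which follows from the strictly positive affine term $(2-\bar{\pi})\mu_i/N$ whenever $\mu_i>0$), and (ii) noting that the contraction argument on each subsequence does not require $\{x(2t)\}$ and $\{x(2t+1)\}$ to start at the same point, so the initial state $\{x_i(0)\},\{x_i(1)\}$ may be arbitrary, as claimed.
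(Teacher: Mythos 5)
Your proposal is correct and takes essentially the same route as the paper: the paper proves Theorem~\ref{th:convergence_di_2} by invoking verbatim the contraction-mapping argument of Theorem~\ref{th:convergence_pi_2}, which likewise exploits the fact that $\bar{\pi}=\sum_l\mu_l/N$ is constant so that each half of \eqref{eq:system_di} becomes an affine map with sup-norm Jacobian bound strictly below one and fixed point $\mu_i/\sum_l\mu_l$. Your explicit verification that the map preserves the simplex and your remark on the independence of the two subsequences are minor additions beyond what the paper records, not a different method.
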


As an illustrative example, Figure~\ref{fig:DIvsAPPROX} shows that the double aggregate dynamic provides an accurate approximation of the system dynamic induced by the
double imitation under channel constraint.

\subsection{Imitation-Based Channel Access Policy under Channel Constraint}
\label{subsec:algorithm}

In this subsection, based on the theoretic results derived previously, we develop a fully distributed channel access policy for the general case with finite population based on the imitation rule among SUs on the same channel (i.e. neighbors). The proposed policy, detailed in Algorithm~\ref{algo:generic_imitation}, is suitable both for proportional and double imitation. Run at each SU $j$ and at each iteration, it consists of:
\begin{compactitem}
\item sampling randomly one (proportional imitation) or two (double imitation) neighbors;
\item comparing the payoff achieved at the previous iteration $t-1$ with that of the neighbor(s) selected for imitation;
\item performing channel migration with the probability dictated by the applied imitation rule.
\end{compactitem}
Algorithm~\ref{algo:generic_imitation} is evaluated by extensive simulations in next section.

 \section{Performance Evaluation}
 \label{sec:simu}
 In this section we conduct extensive simulations to evaluate the performance of the proposed imitation-based channel access policy (PISAP and DISAP) in both scenarios with and without channel constraints and demonstrate some intrinsic properties of the policy which are not explicitly addressed in the analytical part of the paper.

 \subsection{Simulation Settings}

 We simulate a cognitive radio network of $N=50$ SUs and $C=3$ channels, on which PUs has different activity rates on different channels, leading to different channel availability
probabilities characterized by $\mu=[0.3, 0.5, 0.8]$. We assume the iteration time to be long enough so that the SUs, regardless of the occupied channel, can evaluate their payoff
without errors.


 \subsection{System Dynamics}

We first study the system dynamic induced by PISAP and DISAP in the scenarios with and without channel constraint.

As illustrated by Fig.~\ref{fig:oursVSreplicators} and Fig.~\ref{fig:oursVSaggregate},
both \eqref{eq:pop2} and \eqref{eq:pop3}, which reflect PISAP and DISAP behaviour in the asymptotic case with channel constraint, produce trajectories that converge in a faster but less smooth manner if compared with
their respective unconstrained dynamics. This can be interpreted by the overlap of two replicator/aggregate monotone dynamics at odd and even instants, as
explained in section~\ref{sec:NewScenario}.\\
In Fig.~\ref{fig:DIvsPIR} the trends of \eqref{eq:pop2} and \eqref{eq:pop3} are compared. We observe that, in the asymptotic case, DISAP outperforms PISAP as it is characterized by less
pronounced wavelets and a faster convergence. However, all the displayed dynamics correctly converge
to an evolutionary equilibrium. It is easy to check that the converged equilibrium is also the NE of G and the system optimum, which confirms our theoretic analysis.

\subsection{Convergence to Imitation-stable Equilibrium}
We now study the convergence of PISAP and DISAP with and without channel constraint for a finite number of users ($N=50$). Fig.~\ref{fig:pirSuxCHwoCC} and Fig.~\ref{fig:diSuxCHwoCC} show the number
of SUs per channel during the convergence phase for one realization of our algorithms without channel constraint.
We notice that in both cases convergence is rapidly achieved after few iterations, and that the channels with higher availabilities are
chosen by more individuals. This can be easily verified (Fig.~\ref{fig:pirSuxCHwoCC} for PISAP and Fig.~\ref{fig:diSuxCHwoCC} for DISAP) by observing that after
convergence the major part of population settles permanently in channel 3, i.e. the channel that less frequently hosts PU's transmissions.

Fig.~\ref{fig:PIR_convergencePH} and Fig.~ \ref{fig:DI_convergencePH} show that, starting from the same initial conditions, DISAP reaches an imitation-stable equilibrium more rapidly than PISAP,
at the price of a higher algorithmic complexity and a substantial increase in information exchanges amongst the SUs due to double imitation.
Note that the small deviation of the trajectories at some iterations in the figures from the converged curve is due to the
probabilistic nature of the users' strategy and has only very limited impact on the system as a whole.

Fig.~\ref{fig:PIR_SIneighbors_NOlearning} and Fig.~\ref{fig:DI_SIneighbors_NOlearning} show instead a realization of our algorithms with channel constraint.
We notice that an imitation-stable equilibrium is achieved progressively following the dynamics characterized by~\eqref{eq:pop2} and \eqref{eq:pop3}.
The equilibrium is furthermore very close to the system optimum: we can in fact check that, according to Theorem~\ref{th:ne_congestion_game}, the proportion of SUs choosing channel $1$, $2$ and $3$ at the system optimum is $0.1875$, $0.3125$ and $0.5$ respectively (see also Fig.~\ref{fig:oursVSreplicators} and Fig.~\ref{fig:oursVSaggregate});
in the simulation results we observe that there are $9$, $16$ and $25$ SUs settling on channel $1$, $2$ and 3 respectively.

\subsection{System Fairness}


We now turn to the analysis of the fairness of the proposed spectrum access policies. To this end, we adopt the Jain's fairness index~\cite{jains84}, which varies in $[0, 1]$ and reaches its maximum when
the resources are equally shared amongst users. Fig.~\ref{fig:jain}, whose curves represent an average over $10^3$ independent realizations of our algorithms, shows that our system turns out to be very fair
even from the early iterations.\\
From the same figures one can further infer that indeed DISAP converges more rapidly than PISAP: if for instance we fix on the y-axis the fairness value $0.982$, the latter
is reached by DISAP at the iteration $t=100$, and by PISAP at the iteration $t=200$.

\section{Related Work}
\label{sec:related_work}

The spectrum access problem in the considered gene\-ric model (with single SU) is closely related to the classic Multi-Armed Bandit (MAB) problem~\cite{Mahajan07}. In this case, a SU should strike a balance between exploring the environment to find profitable channels (i.e., learn the channel availability probabilities) and exploiting the best one using current knowledge. In this line of research, Gittins developed an index policy in~\cite{Gittins89} that consists of selecting the arm with the highest index termed as Gittins index. This policy is shown to be optimal in the most general case. Lai and Robbins~\cite{Lai85} and then Agrawal~\cite{Agrawal95} studied the MAB problem by proposing policies based on the {\it upper confidence bounds} with logarithmic regret. Compared with the classic MAB problem, one major specialty of the spectrum access in cognitive radio networks lies in the fact of multiple SUs that can cause collisions if they simultaneously access the same channel. Some recent work has investigated this issue, among which Anandkumar \textit{et al.} proposed two algorithms with logarithmic regret, where the number of SUs is known~\cite{Anand09} and unknown and estimated by each SU~\cite{Anand10}, Liu and Zhao developed a time-division fare share (TDFS) algorithm with convergence and logarithmic regret~\cite{Liu09}. 

As the spectrum access problem in cognitive radio is essentially a resource allocation problem, another important thrust consists of applying game theory to model the competition and cooperation (coordination) among SUs and the interaction between SUs and PUs. Along this line of research, a game theoretic framework was developed in~\cite{Nie06} to analyze the behavior of selfish users in cognitive radio networks, resulting in distributed adaptive channel allocation algorithms based on the technique of no-regret learning. In~\cite{Neel04}, the convergence of different types of games in cognitive radio systems was studied (i.e., coordinated behavior, best response, best response for discounted repeated games, S-modular games and potential games). A no-regret learning algorithm was proposed in~\cite{Han07} to address the channel allocation problem in cognitive networks. The algorithm can reach a correlated equilibrium which, in many case, is more efficient than the classic Nash equilibrium of the game. Maskery \textit{et al.} considered the dynamic spectrum access among cognitive radios from an adaptive, game theoretic learning perspective and proposed decentralized dynamic spectrum access protocol~\cite{Maskery09}. Besides, due to the perceived fairness and allocation efficiency, auction techniques have also attracted considerable research attention and resulted in a number of auction-based spectrum allocation mechanisms (cf.~\cite{infocom10} and references therein).

Due to the success of applying evolutionary game theory in the study of biological and economic problems, a handful of recent studies have applied evolutionary game theory as a tool to study resource allocation problems arisen from wired and wireless networks, among which Shakkottai~\textit{et al.} addressed the problem of non-cooperative multi-homing of users to access points in IEEE 802.11 WLANs by modeling it as a population game and studied the equilibrium properties of the game~\cite{Shakkottai07}; Niyato \textit{et al.} studied the dynamics of network selection in a heterogeneous wireless network using the theory of evolutionary game and the replicator dynamic and proposed two network selection algorithm to reach the evolutionary equilibrium~\cite{Niyato09}; Ackermann \textit{et al.} investigated the concurrent imitation dynamics in the context of symmetric congestion games by focusing on the convergence properties~\cite{Ackermann09}; Niyato \textit{et al.} studied the multiple-seller and multiple-buyer spectrum trading game in cognitive radio networks using the replicator dynamic and provided a theoretic analysis for the two-seller two-group-buyer case~~\cite{Niyato09b}. Coucheney~\textit{et al.} studied the user-network association problem in wireless networks with multi-technology and proposed an algorithm to achieve the fair and efficient solution~\cite{Cou09}.

\section{Conclusion and Further Work}
\label{sec:conclusion}

In this paper, we address the spectrum access problem in cognitive radio networks by applying evolutionary game theory and develop an imitation-based
spectrum access policy. We investigate two imitation scenarios where a SU can imitate any other SUs and where it can only imitate the other SUs operating
on the same channel. A systematic theoretical analysis is presented for both scenarios on the induced imitation
dynamics and the convergence properties of the proposed policy to an imitation-stable equilibrium, which is also the $\epsilon$-optimum of the system.
As an important direction of the future work, we plan to investigate the imitation-based channel access problem in the more generic multi-hop scenario where SUs can imitate their neighbors and derive the relevant channel access/assignment policies there.



%
%
%

\bibliographystyle{unsrt}
\bibliography{imitation}

\begin{thebibliography}{10}

\bibitem{Haykin05}
S.~Haykin.
\newblock {Cognitive Radio: Brain-Empowered Wireless Communications}.
\newblock {\em IEEE JSAC}, 23(2):201--220, 2005.

\bibitem{Bud07}
M.~Buddhikot.
\newblock {Understanding Dynamic Spectrum Access: Models, Taxonomy and
  Challenges}.
\newblock In {\em Proc. IEEE DySPAN}, Apr. 2007.

\bibitem{Myerson91}
R.~B. Myerson.
\newblock {\em {Game Theory: Analysis of Conflict}}.
\newblock {Harvard University Press}, {Cambridge, MA}, 1991.

\bibitem{Milchtaich96}
I.~Milchtaich.
\newblock {Congestion Games with Player-Specific Payoff Functions}.
\newblock {\em Games and Economic Behavior}, 13, 1996.

\bibitem{Schlag96}
K.~H. Schlag.
\newblock {Why Imitate, and if so, How ?}
\newblock {\em {Journal of Economic Theory}}, 78(1), Jan. 1998.

\bibitem{Sand10}
W.~H. Sandholm.
\newblock {Local Stability under Evolutionary Game Dynamics}.
\newblock {\em Theoretical Economics}, 5, 2010.

\bibitem{Polkingpplane}
{J. Polking}.
\newblock matlab pplane, http://math.rice.edu/~dfield/dfpp.html.

\bibitem{Ackermann09}
H.~Ackermann, P.~Berenbrink, S.~Fischer, and M.~Hoefer.
\newblock {Concurrent Imitation Dynamics in Congestion Games}.
\newblock In {\em Proc. ACM PODC}, Calgary, Canada, Jun. 2009.

\bibitem{Schlag99}
K.~H. Schlag.
\newblock {Which One Should I Imitate ?}
\newblock {\em Elsevier J. of Mathematical Economics}, 31, 1999.

\bibitem{Sam92}
L.~Sammuelson and J.~Zhang.
\newblock {Evolutionary Stability in Asymmetric Games}.
\newblock {\em J. of Economic Theory}, 57, 1992.

\bibitem{jains84}
R.~Jain, D.~Chiu, and W.~Hawe.
\newblock A quantitative measure of fairness and discrimination for resource
  allocation in shared computer systems.
\newblock {\em DEC Research Report TR-301}, 1984.

\bibitem{Mahajan07}
A.~Mahajan and D.~Teneketzis.
\newblock {Multi-armed Bandit Problems}.
\newblock {\em {Foundations and Applications of Sensor Management,
  Springer-Verlag}}, 2007.

\bibitem{Gittins89}
{J.~C.~Gittins}.
\newblock {Multi-armed Bandit Allocation Indices}.
\newblock {\em {Wiley-Interscience Series in Systems and Optimization, John
  Wiley \& Sons}}, 1989.

\bibitem{Lai85}
{T.~L.~Lai and H.~Robbins}.
\newblock {Asymptotically Efficient Adaptive Allocation Rules}.
\newblock {\em {Advances in Applied Prob.}}, 6(1), 1985.

\bibitem{Agrawal95}
{R.~Agrawal}.
\newblock {Sample Mean Based Index Policies with O(logn) Regret for the
  Multi-Armed Bandit Problem}.
\newblock {\em {Advances in Applied Prob.}}, 27(4), 1995.

\bibitem{Anand09}
{A.~Anandkumar and N.~Michael and A.~Tang and A.~Swami}.
\newblock {Distributed Algorithms for Learning and Cognitive Medium Access with
  Logarithmic Regret}.
\newblock {\em {IEEE JSAC (to appear)}}, 2010.

\bibitem{Anand10}
A.~Anandkumar, N.~Michael, and A.~Tang.
\newblock {Opportunistic Spectrum Access with Multiple Users: Learning under
  Competition}.
\newblock In {\em Proc. IEEE Infocom}, San Diego, CA, Apr. 2010.

\bibitem{Liu09}
{K.~Liu and Q.~Zhao}.
\newblock {Distributed Learning in Multi-Armed Bandit with Multiple Players}.
\newblock {\em {Arxiv 0910.2065}}, 2009.

\bibitem{Nie06}
N.~Nie and C.~Comaniciu.
\newblock {Adaptive Channel Allocation Spectrum Etiquette for Cognitive Radio
  Networks}.
\newblock {\em ACM Mobile Networks and Applications (MONET)}, 11(6):779--797,
  2006.

\bibitem{Neel04}
J.O. Neel, J.H. Reed, and R.P. Gilles.
\newblock {Convergence of Cognitive Radio Networks}.
\newblock In {\em Proc. WCNC 2004}, Atlanta, GA, USA, Mar. 2004.

\bibitem{Han07}
Z.~Han, C.~Pandana, and K.J.R. Liu.
\newblock {Distributive Opportunistic Spectrum Access for Cognitive Radio Using
  Correlated Equilibrium and No-Regret Learning}.
\newblock In {\em Proc. WCNC 2007}, Hong Kong, China, Mar. 2007.

\bibitem{Maskery09}
M.~Maskery, V.~Krishnamurthy, and Q.~Zhao.
\newblock {Decentralized Dynamic Spectrum Access for Cognitive Radios:
  Cooperative Design of a Non-cooperative Game}.
\newblock {\em IEEE Trans. on Comm.}, 57(2):459--469, Feb. 2009.

\bibitem{infocom10}
L.~Chen, S.~Iellamo, M.~Coupechoux, and P.~Godlewski.
\newblock {An Auction Framework for Spectrum Allocation with Interference
  Constraint in Cognitive Radio Networks}.
\newblock In {\em Proc. IEEE Infocom 2010}, San Diego, CA, USA, Mar. 2010.

\bibitem{Shakkottai07}
S.~Shakkottai, E.~Altman, and A.~Kumar.
\newblock {Multihoming of Users to Access Points in WLANs: A Population Game
  Perspective}.
\newblock {\em IEEE JSAC}, 25(6):1207--1215, Aug. 2007.

\bibitem{Niyato09}
D.~Niyato and E.~Hossain.
\newblock {Dynamics of Network Selection in Heterogeneous Wireless Networks: An
  Evolutionary Game Approach}.
\newblock {\em IEEE Trans. on Vehicular Tech.}, 58(4):2008--2017, May. 2009.

\bibitem{Niyato09b}
D.~Niyato, E.~Hossain, and Z.~Han.
\newblock {Dynamics of Multiple-Seller and Multiple-Buyer Spectrum Trading in
  Cognitive Radio Networks: A Game Theoretic Modeling Approach}.
\newblock {\em IEEE Trans. on Mobile Comp.}, 8(8):1009--1022, Aug. 2009.

\bibitem{Cou09}
P.~Coucheney, C.~Toutati, and B.~Gaujal.
\newblock {Fair and Efficient User-Network Association Algorithm for
  Multi-Technology Wireless Networks}.
\newblock In {\em Proc. IEEE Infocom}, Apr. 2009.

\bibitem{Abraham88}
R.~Abraham, J.~Marsden, and T.~Ratiu.
\newblock {\em {Manifolds, tensor analysis, and applications}}.
\newblock {Springer-Verlag}, 1988.

\end{thebibliography}

\appendix

\begin{proof}[\textbf{Proof of Theorem~\ref{th:convergence_simple}}]

We first prove the convergence of PISAP to an imitation-stable equilibrium.

Let $i^{max}\triangleq\argmax_{i\in{\cal C}} \pi_i(t)$, $i^{min}\triangleq\argmin_{i\in{\cal C}} \pi_i(t)$, we show that for any iteration $t$, if $\pi_{i^{max}}(t)-\pi_{i^{min}}(t)>\epsilon_U$, then at least one of the following holds
\begin{eqnarray*}
\begin{cases}
\pi_{i^{min}}(t+1)-\pi_{i^{min}}(t) \sim O(\frac{\mu_{i^{min}}\sigma\epsilon_U}{N^2}) \\
\pi_{i^{max}}(t)-\pi_{i^{max}}(t+1) \sim O(\frac{\mu_{i^{max}}\sigma\epsilon_U}{N^2})
\end{cases}.
\end{eqnarray*}
I.e., if the difference between the highest expected individual payoff $\pi_{i^{max}}(t)$ and the worst one $\pi_{i^{min}}(t)$ is larger than $\epsilon_U$, we can at least increase $\pi_{i^{min}}(t)$ by $O(\frac{\mu_{i^{min}}\sigma\epsilon_U}{N^2})$ or decrease $\pi_{i^{max}}(t)$ by $O(\frac{\mu_{i^{max}}\sigma\epsilon_U}{N^2})$.


Define the $\epsilon_U$-worst channel set ${\cal C}_{\epsilon_U}$ as the channel set such that a channel $i\in{\cal C}_{\epsilon_U}$ iff $\pi_i-\pi_{i^{min}}\le \epsilon_U$ and $\overline{\pi}-\pi_i> \epsilon_U/2$. In the same way, define the $\epsilon_U$-best channel set ${\cal C}^{\epsilon_U}$ as the channel set such that a channel $i\in{\cal C}^{\epsilon_U}$ iff $\pi_{i^{max}}-\pi_i \le \epsilon_U$ and $\pi_i-\overline{\pi}> \epsilon_U/2$. At any iteration $t$, if $\pi_{i^{max}}(t)-\pi_{i^{min}}(t)>\epsilon_U$, at least one of ${\cal C}_{\epsilon_U}$ and ${\cal C}^{\epsilon_U}$ is not empty. Without loss of generality, assume that ${\cal C}_{\epsilon_U}\ne \emptyset$.

For any channel $i\in{\cal C}_{\epsilon_U}(t)$, let $x^i_l(t+1)$ denote the proportion of SUs migrating from channel $i$ to $l$ after the imitation in the iteration $t$, i.e., the proportion of SUs operating on channel $i$ at iteration $t$ and switching to channel $l$ for iteration $t+1$. Given that the probability of imitating a SU in channel $l$ is $x_l(t)$, $x^i_l(t+1)$ can be computed as
\begin{equation*}
x^i_l(t+1)=
\begin{cases}
x_i(t)x_l(t)\sigma[\pi_l(t)-\pi_i(t)] & l\in{\cal C}-{\cal C}_{\epsilon_U}(t) \\
0 & l\in{\cal C}_{\epsilon_U}(t)
\end{cases}.
\end{equation*}

Denote $\Delta x_i(t)\triangleq x_i(t+1)-x_i(t)$, it follows from the imitation rule that no SU migrates to channel $i^{min}$. Hence, $\Delta x_{i^{min}}(t)<0$ and
\begin{eqnarray*}
-\mathbb{E}[\Delta x_{i^{min}}(t)]&=&\sum_{l\in{\cal C}-{\cal C}_{\epsilon_U}(t)} \hspace{-0.3cm}\mathbb{E}[x^{i^{min}}_l(t+1)]=x_{i^{min}}(t)\sigma\left[\sum_{l\in{\cal C}-{\cal C}_{\epsilon_U}(t)} \hspace{-0.3cm}x_l(t)\pi_l(t)-\sum_{l\in{\cal C}-{\cal C}_{\epsilon_U}(t)} \hspace{-0.3cm}x_l(t)\pi_{i^{min}}(t)\right] \\
&=& x_{i^{min}}(t)\sigma\sum_{l\in{\cal C}-{\cal C}_{\epsilon_U}(t)} x_l(t)[\pi_l(t)-\pi_{i^{min}}(t)].
\end{eqnarray*}

Since ${\cal C}_{\epsilon_U}(t)\ne\emptyset$, there exists at least a channel $l_0$ with $\pi_{l_0}-\pi_{i^{min}}>\epsilon_U$ and $x_{l_0}(t)\ge 1/N$. We have
\begin{equation}
-\mathbb{E}[\Delta x_{i^{min}}(t)] > x_{i^{min}}(t)\sigma x_{l_0}(t)[\pi_{l_0}(t)-\pi_{i^{min}}(t)]> \frac{x_{i^{min}}(t)\sigma\epsilon_U}{N}.
\label{eq:prove_reasonning}
\end{equation}

It then follows that
\begin{eqnarray*}
\pi_{i^{min}}(t+1)-\pi_{i^{min}}(t) &=& \frac{\mu_{i^{min}}}{Nx_{i^{min}}(t+1)}-\frac{\mu_{i^{min}}}{Nx_{i^{min}}(t)} = \frac{1}{N}\left[\frac{\mu_{i^{min}}}{x_{i^{min}}(t)+\Delta x_{i^{min}}(t)}-\frac{\mu_{i^{min}}}{x_{i^{min}}(t)}\right] \\
                                    & >& \frac{\mu_{i^{min}}[-\Delta x_{i^{min}}(t)]}{N[x_{i^{min}}(t)]^2}>\frac{\mu_{i^{min}}\sigma\epsilon_U}{N^2x_{i^{min}}(t)}>\frac{\mu_{i^{min}}\sigma\epsilon_U}{N^2}
\end{eqnarray*}

Given that $\pi_i\in[0,1], \forall i\in{\cal C}$ and let $\mu_{min}\triangleq \min_{i \in {\cal C}} \mu_i$, it holds that after at most $O(\frac{N^2}{\mu_{min}\sigma\epsilon_U})$ iterations, PISAP converges to an imitation-stable equilibrium where $\pi_{i^{max}}-\pi_{i^{min}}\le\epsilon_U$.

We then show by contradiction that the imitation-stable equilibrium is an $\epsilon$-Nash of $G$ with $\epsilon=2\epsilon_U$.

Denote $x_i(\infty)$ and $\pi_i(\infty)$ the converged values of $x_i$ and $\pi_i$ in PISAP. Denote $\mathbf{x^*}=\{x_i^*\}$ the NE of $G$. Recall that all payoffs at NE are equal: $\mu_i/Nx_i^*$ is a constant. Assume now that there exists channel $i_0$ such that
$$|\pi_{i_0}(\infty)-\mu_{i_0}/Nx^*_{i_0}|>2\epsilon_U.$$
Without loss of generality, assume that
$$\pi_{i_0}(\infty)-\mu_{i_0}/Nx^*_{i_0}>2\epsilon_U.$$
We have now:
\begin{eqnarray*}
\forall i\in{\cal C}, \; \pi_i(\infty)-\mu_i/Nx_i^*&=&\pi_{i_0}(\infty)-\mu_i/Nx_i^*+\pi_i(\infty)-\pi_{i_0}(\infty) \\
&=&(\pi_{i_0}(\infty)-\mu_{i_0}/Nx_{i_0}^*)+(\pi_i(\infty)-\pi_{i_0}(\infty)) \\
&>&  2\epsilon_U - |\pi_{i^{max}}-\pi_{i^{min}}| \\
&>&2\epsilon_U-\epsilon_U = \epsilon_U.
\end{eqnarray*}

This leads to $x_i(\infty)<x_i^*, \forall i\in{\cal C}$ and $\sum_{i\in{\cal C}} x_i(\infty)=1<\sum_{i\in{\cal C}} x_i^*=1$, which is clearly a contradiction.
\end{proof}

\begin{proof}[\textbf{Proof of Theorem~\ref{th:convergence_double}}]

The proof follows the same analysis as that of Theorem~\ref{th:convergence_simple}. With the same notation in the proof of Theorem~\ref{th:convergence_simple}, we first show that for each iteration $t$, if $\pi_{i^{max}}(t)-\pi_{i^{min}}(t)>\epsilon_U$, then at least one of the following holds
\begin{eqnarray*}
\begin{cases}
\pi_{i^{min}}(t+1)-\pi_{i^{min}}(t) \sim O(\frac{\mu_{i^{min}}\sigma\epsilon_U}{N^2}) \\
\pi_{i^{max}}(t)-\pi_{i^{max}}(t+1) \sim O(\frac{\mu_{i^{max}}\sigma\epsilon_U}{N^2})
\end{cases}.
\end{eqnarray*}

Under the double imitation, $x^i_l(t+1)$ can be computed as
\begin{equation*}
x^i_l(t+1)=
\begin{cases}
x_i(t)x_l(t)p_{l} & l\in{\cal C}-{\cal C}_{\epsilon_U}(t) \\
0 & l\in{\cal C}_{\epsilon_U}(t)
\end{cases}.
\end{equation*}

Injecting $p_l$ into the above formula, after some algebraic operations and following the same reasoning as that in~\eqref{eq:prove_reasonning}, we have
\begin{eqnarray*}
-\mathbb{E}[\Delta x_{i^{min}}(t)] &=& \frac{\sigma x_{i^{min}}(t)}{\omega-\alpha}\left[1+\frac{\omega-\sum_{l\in{\cal C}-{\cal C}_{\epsilon_U}(t)} \hspace{-0.1cm}x_l(t)\pi_l(t)}{\omega-\alpha}\right]
\left[\sum_{l\in{\cal C}-{\cal C}_{\epsilon_U}(t)} \hspace{-0.3cm}x_l(t)\pi_l(t)-\sum_{l\in{\cal C}-{\cal C}_{\epsilon_U}(t)} \hspace{-0.3cm}x_l(t)\pi_{i^{min}}(t)\right] \\
&>& \frac{\sigma x_{i^{min}}(t)}{\omega-\alpha}\frac{\epsilon_U}{N}.
\end{eqnarray*}

It then follows that
\begin{eqnarray*}
\pi_{i^{min}}(t+1)-\pi_{i^{min}}(t) &=& \frac{\mu_{i^{min}}}{Nx_{i^{min}}(t+1)}-\frac{\mu_{i^{min}}}{Nx_{i^{min}}(t)} = \frac{1}{N}\left[\frac{\mu_{i^{min}}}{x_{i^{min}}(t)+\Delta x_{i^{min}}(t)}-\frac{\mu_{i^{min}}}{x_{i^{min}}(t)}\right] \\
                                    & > & \frac{\mu_{i^{min}}[-\Delta x_{i^{min}}(t)]}{N[x_{i^{min}}(t)]^2} >\frac{\mu_{i^{min}}\sigma\epsilon_U}{(\omega-\alpha)N^2x_{i^{min}}(t)}>\frac{\mu_{i^{min}}\sigma\epsilon_U}{N^2}.
\end{eqnarray*}

Given that $\pi_i\in[0,1], \forall i\in{\cal C}$ and let $\mu_{min}\triangleq \min_{i \in {\cal C}} \mu_i$, it holds that after at most $O(\frac{N^2}{\mu_{min}\sigma\epsilon_U})$ iterations, DISAP converges to an imitation-stable equilibrium where $\pi_{i^{max}}-\pi_{i^{min}}\le\epsilon_U$.

It then can be shown in the same way as in the proof of Theorem~\ref{th:convergence_simple} that the imitation-stable equilibrium is an $\epsilon$-Nash of $G$ with $\epsilon=2\epsilon_U$.
\end{proof}

\begin{proof}[\textbf{Proof of theorem~\ref{th:finitePop}}]

We prove the statement for $t=2$. The case for $t\ge 3$ is analogous to \cite{Schlag96}, which can be shown by induction and is therefore omitted.


Define the random variable $w_i^j(c)$ such that
\begin{equation*}
w_i^j(c)=
\begin{cases}
1 & \text{if SU $c$ is on channel $j$ at iteration $t=1$} \\
  & \text{and migrates to channel $i$ at $t=2$} \\
0 & \text{otherwise}
\end{cases}.
\end{equation*}
\\
{\bf Proportional imitation}: If $j\neq s_c(1)$, it holds that $w_i^j(c)=0$, otherwise, $c$ imitates a SU that was using channel $k$ at $t=0$ and currently ($t=1$) on the same channel as $c$ ($s_c(1)$) with probability
$\frac{n_{s_c(1)}^k}{n_{s_c(1)}}$ and migrates to channel $i$ with probability $F_{s_c(0),k}^{i}$. Note that we allow for self-imitation in our algorithm.
We thus have:
\begin{eqnarray*}
\mathbb{P}[w_i^j(c)=1] =
\begin{cases}
0 & \mbox{ if } j\neq s_c(1) \\
\displaystyle \sum_{k\in \mathcal{C}} \frac{n_{s_c(1)}^{k}}{n_{s_c(1)}}F_{s_c(0),k}^{i}  & \mbox{ otherwise}
\end{cases}.
\end{eqnarray*}

We can now derive the population proportions at iteration $t=2$ as:
\begin{equation*}
p_i^j(2)=\frac{1}{N}\sum_{c\in {\cal N}}w_i^j(c) \quad \forall i, j\in \mathcal{C}.
\end{equation*}
The expectations of these proportions can now be written as (using the Kronecker delta $\delta_{i,j}$):

\begin{eqnarray*}
\mathbb{E}[p_i^j(2)] &=& \frac{1}{N}\sum_{c\in \mathcal{N}}\mathbb{P}[w_i^j(c)=1] = \frac{1}{N}\sum_{c\in \mathcal{N},k\in \mathcal{C}} \frac{n_{s_c(1)}^k(1) F_{s_c(0),k}^i\delta_{j,s_c(1)}}{n_{s_c(1)}(1)} \\
&=&\frac{1}{N}\sum_{h,l,k\in \mathcal{C}} \frac{n_h^l(1)n_h^k(1)F_{l,k}^{i}\delta_{j,h}}{n_h(1)} = \frac{1}{N}\sum_{l,k\in \mathcal{C}} \frac{n_j^l(1)n_j^k(1)F_{l,k}^{i}}{n_j(1)} = \sum_{l,k\in \mathcal{C}} \frac{\tilde{x}_j^l(1)\tilde{x}_j^k(1)}{\tilde{x}_j(1)}F_{l,k}^{i}.
\end{eqnarray*}

It follows that
\begin{eqnarray*}
\mathbb{E}[p_i(2)]&=&\sum_{j \in \mathcal{C}}\mathbb{E}[p_i^j(2)] = \sum_{j,l,k \in \mathcal{C}} \frac{\tilde{x}_j^l(1)\tilde{x}_j^k(1)}{\tilde{x}_j(1)}F_{l,k}^{i}.
\end{eqnarray*}


As $w_i^j(c)$ and $w_i^j(d)$ are independent random variables for $c\neq d$ and since the variance of $w_i^j(c)$ is less than $1$, the variance of $p_i^j(2)$ and $p_i(2)$ for any $i,j\in{\cal C}$ are less than $1/N$ and $C/N$, respectively. It then follows the Bienaym\'e-Chebychev inequality that
\begin{equation*}
\forall i\in{\cal C}, \mathbb{P}[  \{ | p_i(2)-\mathbb{E}[p_i(2)] | > \delta  \} ] < \frac{C}{(N\delta)^2}.
\end{equation*}

Choosing $N_{0}$ such that $\frac{C}{(N_0\delta)^2}<\epsilon$ concludes the proof for $t=2$.
The proof can then be induced to any $t$ as in \cite{Schlag96}.

{\bf Double imitation}: If $j\neq s_c(1)$, it holds that $w_i^j(c)=0$, otherwise, $c$ imitates two SUs that were using respectively channel $k$ and channel $z$ at $t=0$ and currently ($t=1$) on the same channel as $c$ ($s_c(1)$) with
probability $\frac{n_{s_c(1)}^k}{N_{s_c(1)}}\frac{n_{s_c(1)}^{z}}{n_{s_c(1)}}$ and migrates to channel $i$ with probability $F_{s_c(0),\{k,z\}}^{i}$.\\
The proof follows in the steps of the proportional imitation and only the main passages will be sketched out. We get:
\begin{eqnarray*}
\mathbb{P}[w_i^j(c)=1] =
\begin{cases}
0 & \mbox{ if } j\neq s_c(1) \\
\displaystyle \sum_{k,z\in \mathcal{C}} \frac{n_{s_c(1)}^{k}}{n_{s_c(1)}}\frac{n_{s_c(1)}^{z}}{n_{s_c(1)}}F_{s_c(0),\{k,z\}}^{i}  & \mbox{ otherwise}
\end{cases}.
\end{eqnarray*}

We then derive the type proportions expectations:
\begin{eqnarray*}
\mathbb{E}[p_i^j(2)] &=& \frac{1}{N}\sum_{c\in \mathcal{N}}\mathbb{P}[w_i^j(c)=1] = \frac{1}{N}\sum_{c\in \mathcal{N},k,z\in \mathcal{C}} \frac{n_{s_c(1)}^k(1) }{n_{s_c(1)}(1)}\frac{n_{s_c(1)}^{z}(1)}{n_{s_c(1)}(1)}F_{s_c(0),\{k,z\}}^i\delta_{j,s_c(1)} \\
&=& \sum_{l,k,z\in \mathcal{C}} \frac{\tilde{x}_j^l(1)\tilde{x}_j^k(1)\tilde{x}_j^z(1)}{[\tilde{x}_j(1)]^2}F_{l,\{k,z\}}^{i}.
\end{eqnarray*}

It follows that:
\begin{eqnarray*}
\mathbb{E}[p_i(2)]&=&\sum_{j \in \mathcal{C}}\mathbb{E}[p_i^j(2)] = \sum_{j,l,k,z \in \mathcal{C}} \frac{\tilde{x}_j^l(1)\tilde{x}_j^k(1)\tilde{x}_j^z(1)}{[\tilde{x}_j(1)]^2}F_{l,\{k,z\}}^{i}.
\end{eqnarray*}
The rest of the proof for the double imitation follows the same way as that of proportional imitation.
\end{proof}

\begin{proof}[\textbf{Proof of theorem~\ref{th:convergence_pi_2}}]

We prove the convergence of \eqref{eq:system_pi} by showing that the mapping described by \eqref{eq:system_pi} is a contraction. A contraction mapping is defined \cite{Abraham88} as follows: let
$(X,d)$ be a metric space, $f$: $X \rightarrow X$ is a contraction
if there exists a constant $k\in[0,1)$ such that $\forall x, y \in
X$, $d(f(x), f(y)) \leq kd(x,y)$, where $d(x, y)= ||x-y|| =\max_i
|x_i - y_i|$. Such an $f$ is called a contraction and admits a unique fixed point, to which the mapping described by $f$ converges.

Noticing that
\begin{eqnarray*}
 d(f(x), f(y)) = ||f(x)-f(y)|| \leq
\left|\left|{\partial f \over
\partial x}\right|\right| \cdot||x-y||= \left|\left|{\partial f \over \partial x}\right|\right| d(x,y),
\end{eqnarray*}
it suffices to show that the Jacobian $\displaystyle \left|\left|{\partial f \over
\partial x}\right|\right| \leq k$. In our case, it suffices to show that $||J||_{\infty} \leq k$, where
$J=\{J_{ij}\}$ is the Jacobian of the mapping described by one of the equation in \eqref{eq:system_pi}, defined by
$\displaystyle J_{ij}={\partial x_i(t+1) \over
\partial x_j(t)}$.

Recall that $\pi_i=\frac{\mu_i}{Nx_i}$ and $\bar{\pi}=\sum_l \frac{\mu_l}{N}$, \eqref{eq:system_pi} can be rewritten as
\begin{eqnarray*}
x_i(u)=x_i(u-1)+\sigma\left[\frac{\mu_i}{N}-x_i(u-1)\sum_l \frac{\mu_l}{N}\right].
\end{eqnarray*}

It follows that
\vspace{-0.5cm}
\begin{eqnarray*}
J_{ij} =
\begin{cases}
\displaystyle 1-\sum_l \frac{\mu_l}{N} & j=i \\
0 & \text{otherwise}
\end{cases}.
\end{eqnarray*}

Hence
\vspace{-0.5cm}
\begin{eqnarray*}
||J||_{\infty}=\max_{i\in{\cal N}} \sum_{j\in{\cal N}} \left|J_{ij}\right|=1-\sum_l \frac{\mu_l}{N}<1,
\end{eqnarray*}
which shows that the mapping described by \eqref{eq:system_pi} is a contraction. It is further easy to check that the fixed point of \eqref{eq:system_pi} is $\{x^*=\frac{\mu_i}{\sum_{l\in{\cal N}} \mu_l}\}$, which is also the unique NE of $G$.
\end{proof}

\newpage

\begin{algorithm}
\caption{PISAP: executed at each SU $j$}
\begin{algorithmic}[1]
\STATE \textbf{Initialization}: set the imitation factor $\sigma$ and the imitation threshold $\epsilon_U$
\STATE For the first iteration $t=1$, randomly choose a channel to stay
\WHILE{at each iteration $t\ge 2$}
    \STATE Randomly select a SU $j'$
    \IF{$U_j<U_{j'}-\epsilon_U$}
        \STATE Migrate to the channel $s_{j'}$ with probability $p=\sigma (U_{j'}-U_j)$
    \ENDIF
\ENDWHILE
\end{algorithmic}
\label{algo:pir}
\end{algorithm}

\begin{algorithm}
\caption{DISAP: executed at each SU $j$ for each iteration}
\begin{algorithmic}[1]
\STATE \textbf{Initialization}: set the two exogenous parameters $\omega$ and $\alpha$ such that the payoff of SUs falls into the interval $[\alpha,\omega]$, set the imitation factor $\sigma$ and the imitation threshold $\epsilon_U$
\STATE Randomly sample two SUs $j_1$ and $j_2$ who, at iteration $t-1$, were respectively on channel $i_1$ and $i_2$
\IF{$i_1=i_2$}
\STATE  $ \begin{array}{@{\hspace{-0mm}}r@{\;}l@{\hspace{0mm}}}
p_{j_1} = & \frac{\sigma }{2}\left[Q(U_{j_1})(U_{j_1}-U_{j})+Q(U_{j_2})(U_{j_2}-U_j)\right]^+ \mbox{ where } [A]^+ \mbox{ denotes } \max \{0, A\} \mbox{ and }\\ 
& Q(U_r)\triangleq\frac{1}{\omega - \alpha}\left[2 - \frac{U_r-\alpha}{\omega - \alpha}\right]\\
	\end{array}
$ \hfill \phantom{.}


$p_{j_2}=0$

\ELSIF{$i_1=i$}
\STATE $p_{j_1}=0$\\
$p_{j_2}=\frac{\sigma }{4}\left[Q(U_{j_1})(U_{j_2}-U_{j_1})+Q(U_j)(U_{j_2}-U_{j_1})\right]^+$
\ELSE
\STATE $p_{j_1} =\frac{\sigma }{2}\left[Q(U_j)(U_{j_1}-U_{j_2})+Q(U_{j_2})(U_{j_1}-U_j)\right]^+$\\
$p_{j_2}=\frac{\sigma }{2}\left[Q(U_{j_1})(U_{j_2}-U_j)+Q(U_{j_2})(U_{j_1}-U_j)\right]^+-p_{j_1}$
\ENDIF
\STATE Switch to channel $i_1$ with probability $p_{j_1}$ if $U_j<U_{j_1}-\epsilon_U$, switch to channel $i_2$ with probability $p_{j_2}$ if $U_j<U_{j_2}-\epsilon_U$
\end{algorithmic}
\label{algo:apir}
\end{algorithm}

\begin{algorithm}
\caption{Imitation-based Spectrum Access Policy under Channel Constraint: executed at each SU $j$}
\begin{algorithmic}[1]
\STATE \textbf{Initialization:} set the imitation factor $\sigma$ and the imitation threshold $\epsilon_U$
\STATE Randomly choose a channel for the first two iterations $t=0,1$
\WHILE{for each iteration $t\ge 2$}
        \STATE Perform imitation in PISAP or DISAP on the same channel
        \STATE $t \leftarrow t+1$:
\ENDWHILE
\end{algorithmic}
\label{algo:generic_imitation}
\end{algorithm}

\begin{figure}
 \centering
\includegraphics[width=0.49\linewidth]{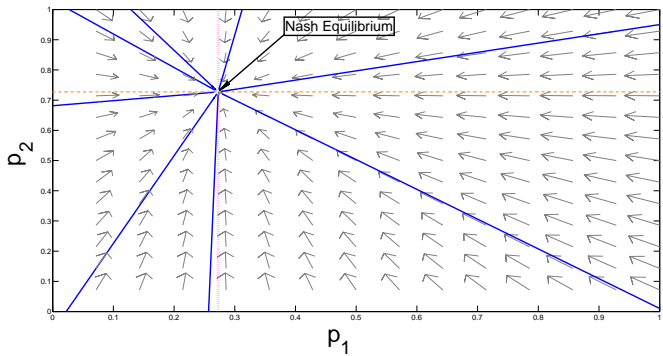}
\caption{Replicators and aggregate monotone dynamics generate a similar phase plane. This is the case of a 2-strategies game with $N=50$ SUs and $\mu{}=[0.3~0.8]$. As investigated in
Section~\ref{sec:imitation} and Section~\ref{sec:models}, the system has a unique NE, to which all trajectories (solid lines) converge exponentially}
\label{fig:phasePlane}
\end{figure}

\begin{figure*} 
\begin{minipage}[r]{0.49\linewidth}
\includegraphics[width=\linewidth]{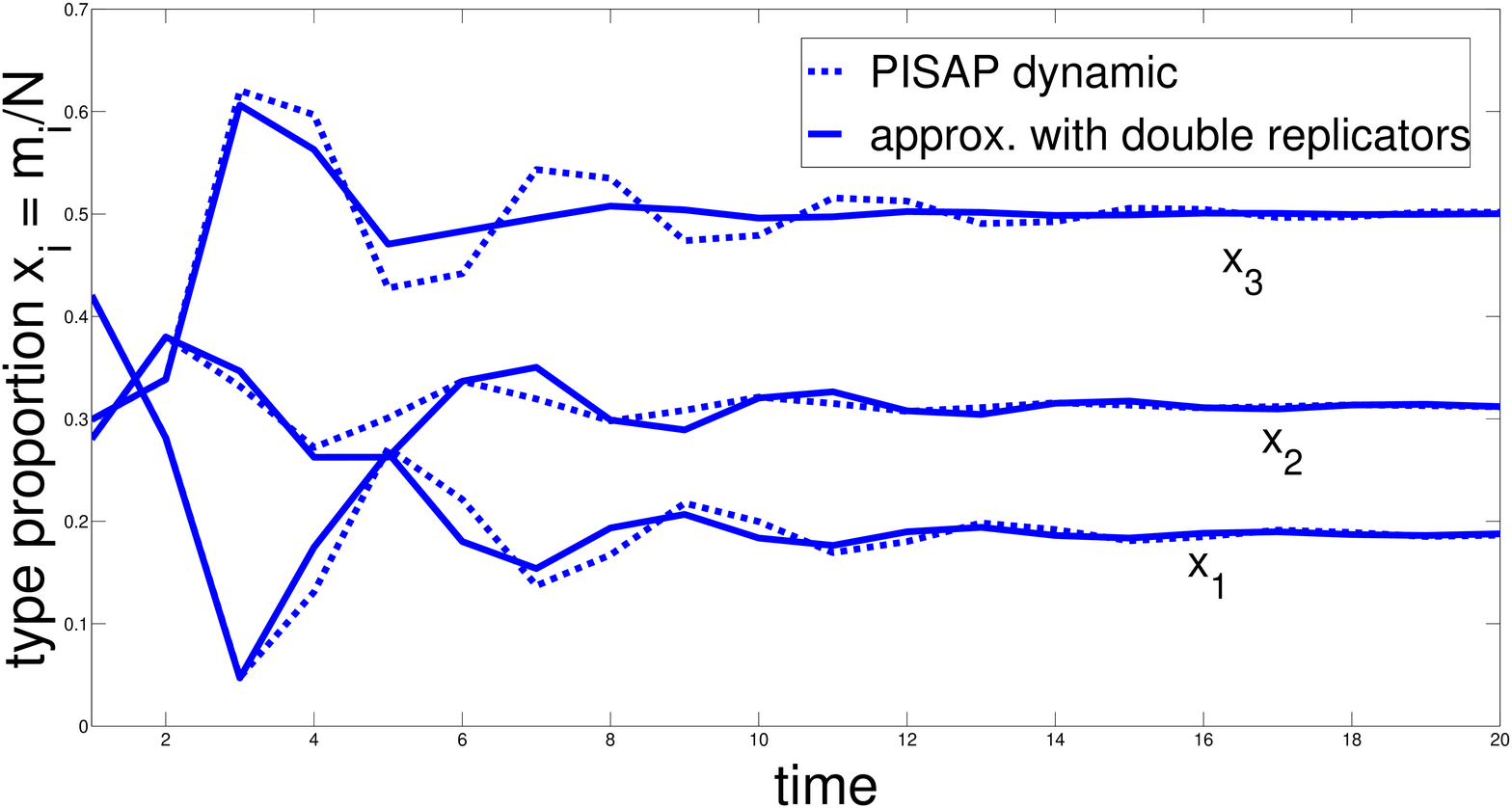}
\caption{PISAP dynamic and its approximation by double replicator dynamic.}
\label{fig:approximation}
\end{minipage} \hfill
\begin{minipage}[c]{0.49\linewidth}
\includegraphics[width=\linewidth]{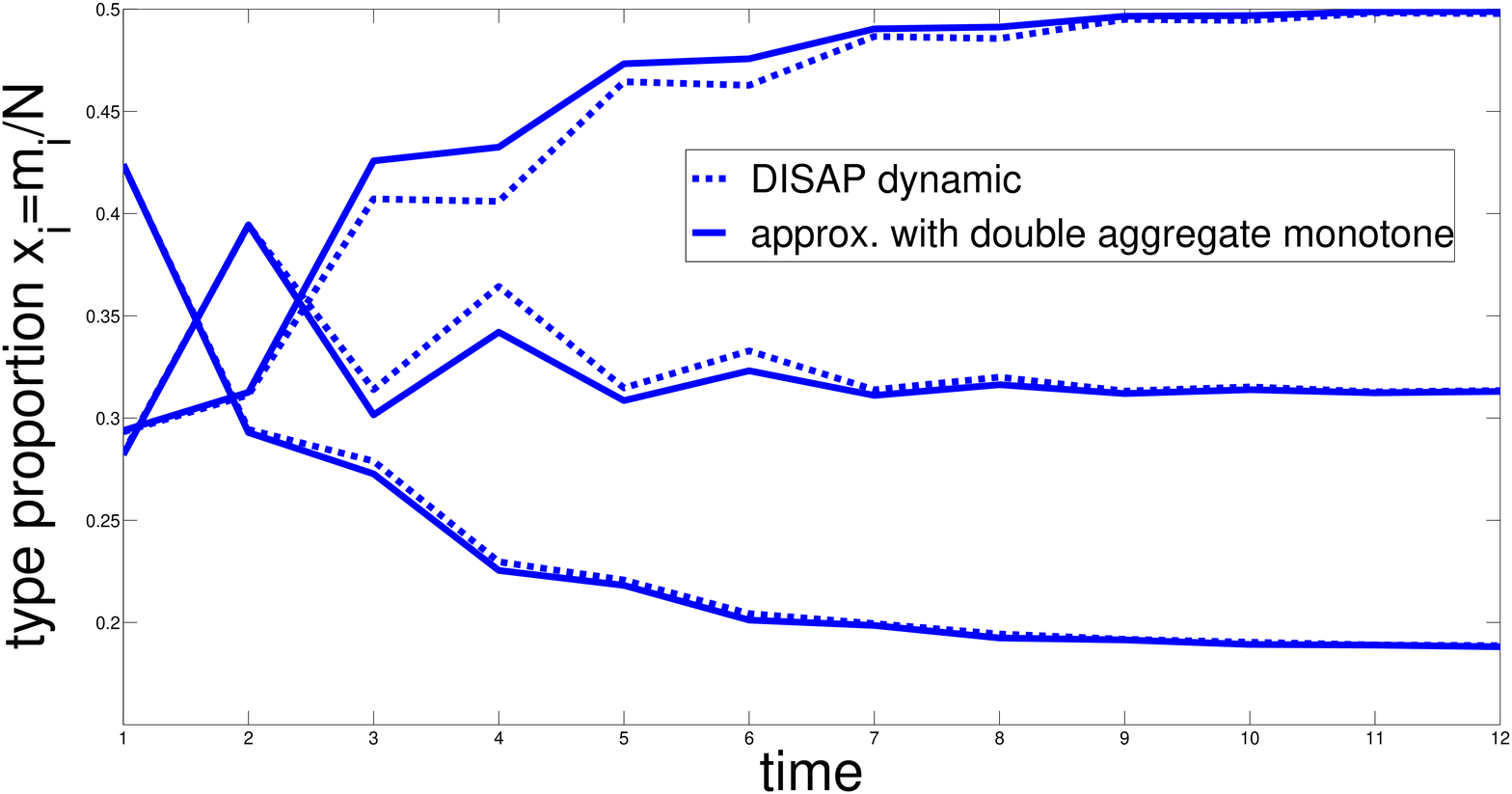}
\caption{DISAP dynamic and its approximation by double aggregate monotone dynamic.}
\label{fig:DIvsAPPROX}
\end{minipage}\hfill
\vspace{0.5cm}
\end{figure*}

\begin{figure*}
\begin{minipage}[r]{0.49\linewidth}
\includegraphics[width=\linewidth]{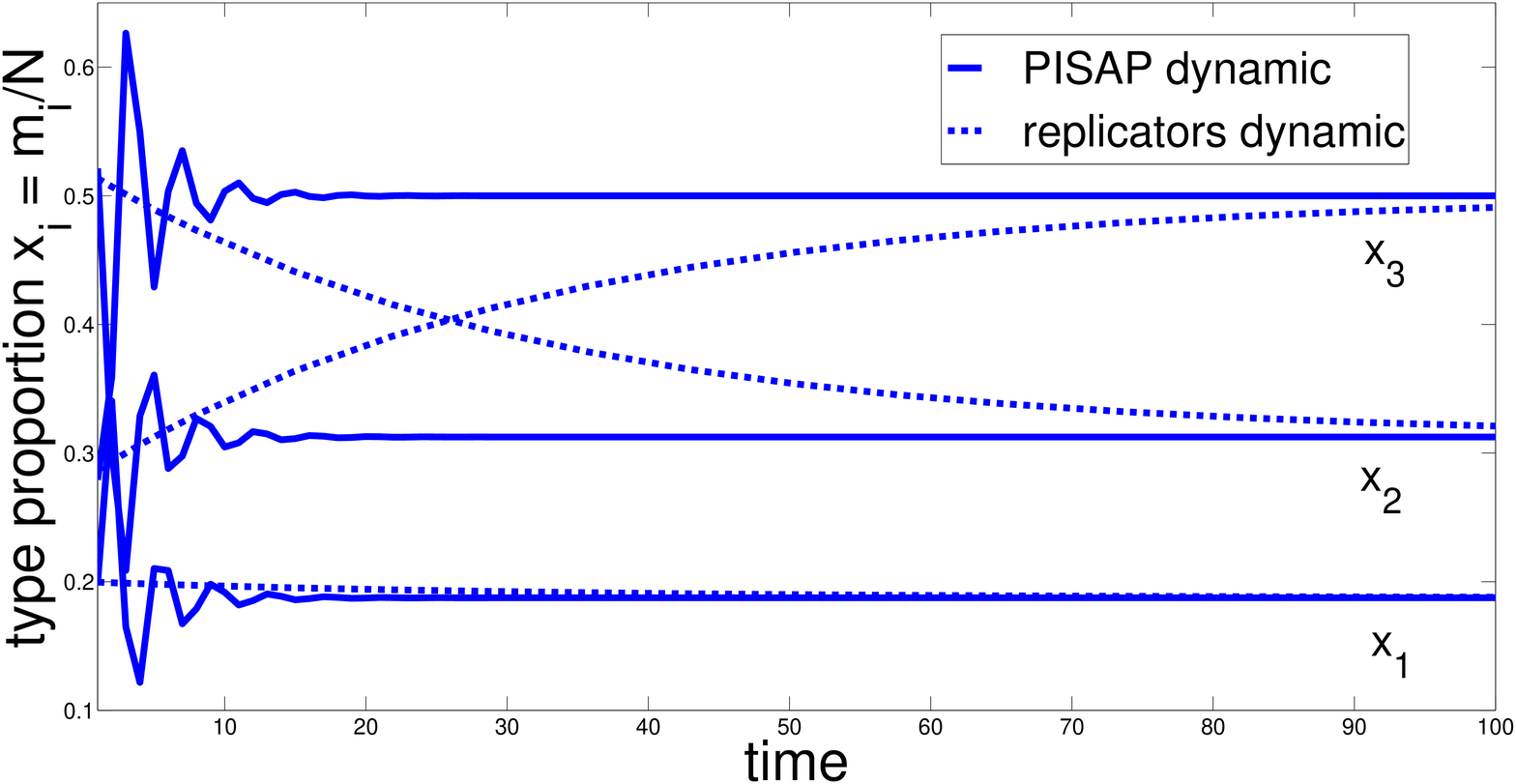}
\caption{PISAP dynamic with channel constraint and replicator dynamic without channel constraint}
\label{fig:oursVSreplicators}
\end{minipage} \hfill
\begin{minipage}[c]{0.49\linewidth}
\includegraphics[width=\linewidth]{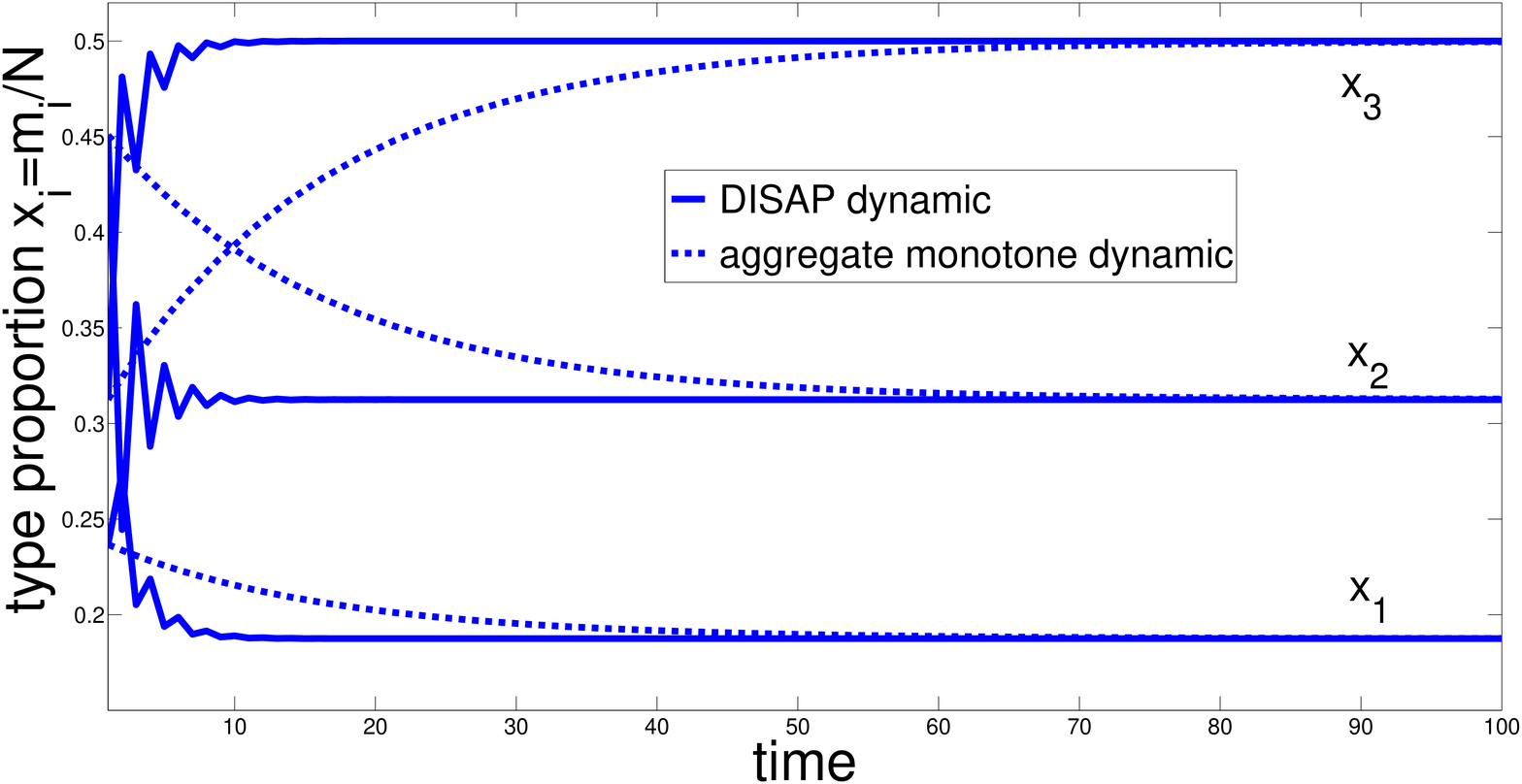}
\caption{DISAP dynamic with channel constraint and aggregate monotone dynamic without channel constraint}
\label{fig:oursVSaggregate}
\end{minipage}\hfill
\vspace{0.5cm}
\end{figure*}

\begin{figure}
\centering
\includegraphics[width=0.49\linewidth]{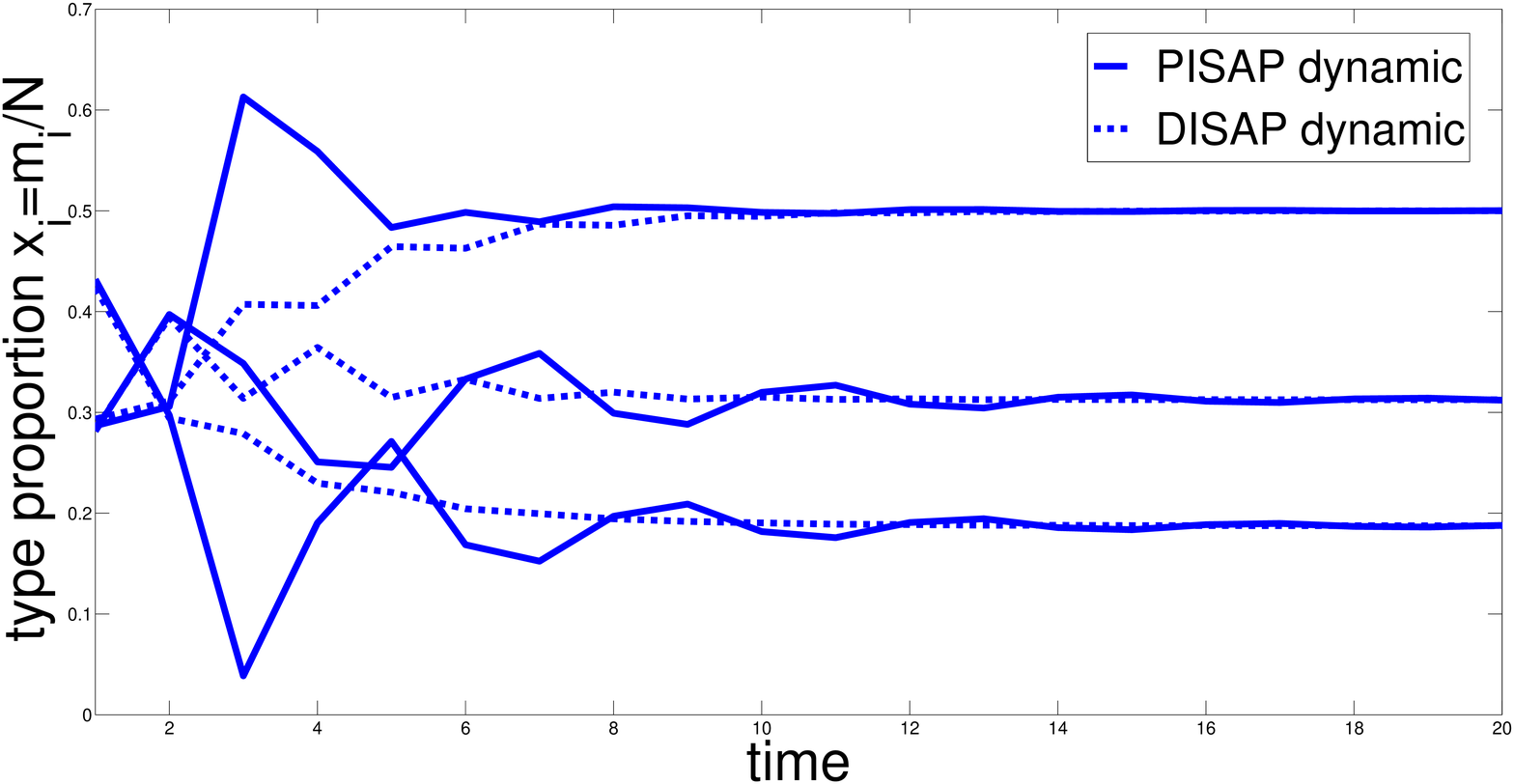}
\caption{DISAP dynamic compared to PISAP dynamic \textbf{with} channel constraint.}
\label{fig:DIvsPIR}
\end{figure}

\begin{figure*}
\begin{minipage}[r]{0.49\linewidth}
\includegraphics[width=\linewidth]{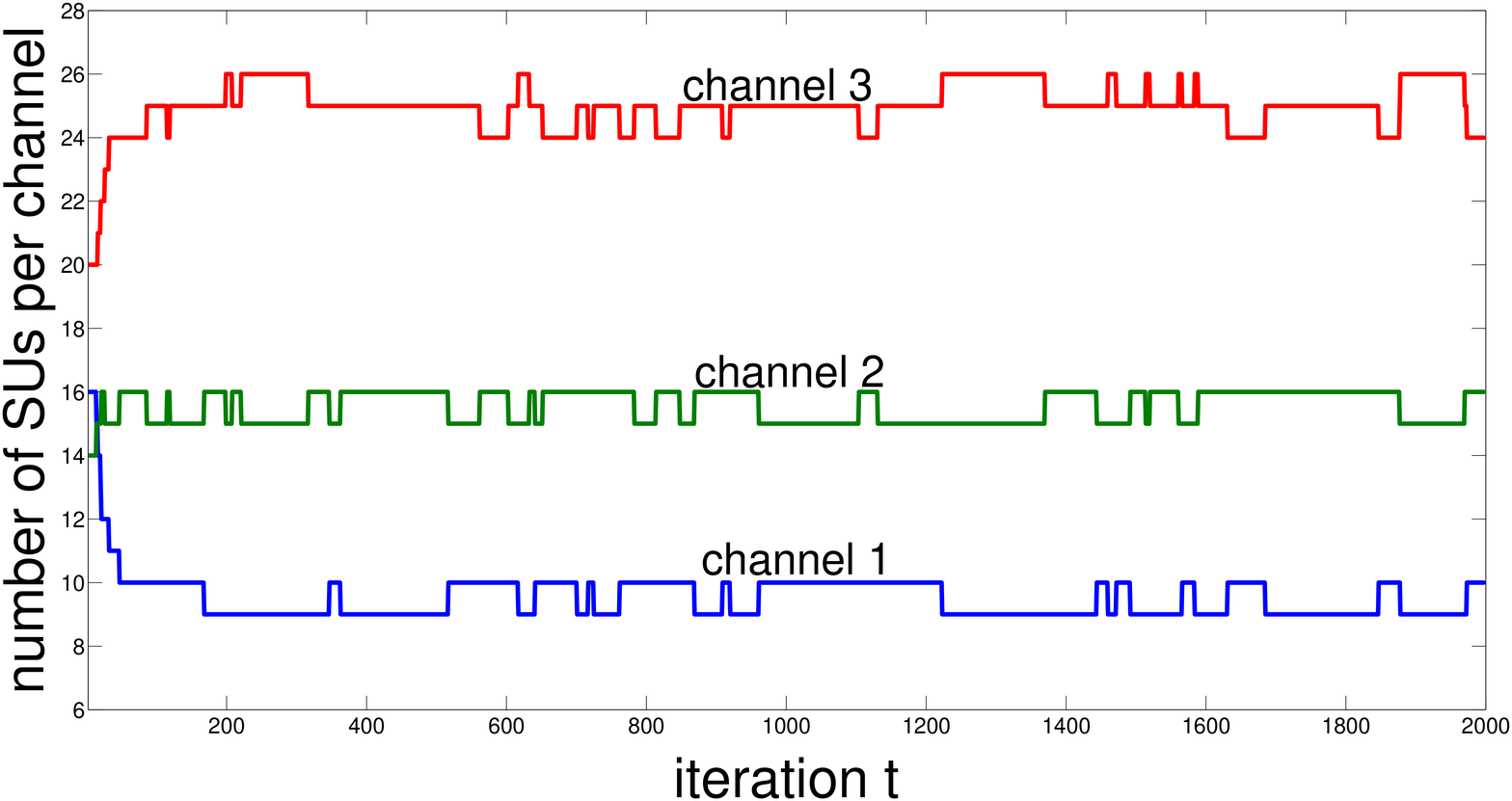}
\caption{PISAP: number of SUs per channel as a function of time \textbf{without} channel constraint}
\label{fig:pirSuxCHwoCC}
\end{minipage} \hfill
\begin{minipage}[c]{0.49\linewidth}
\includegraphics[width=\linewidth]{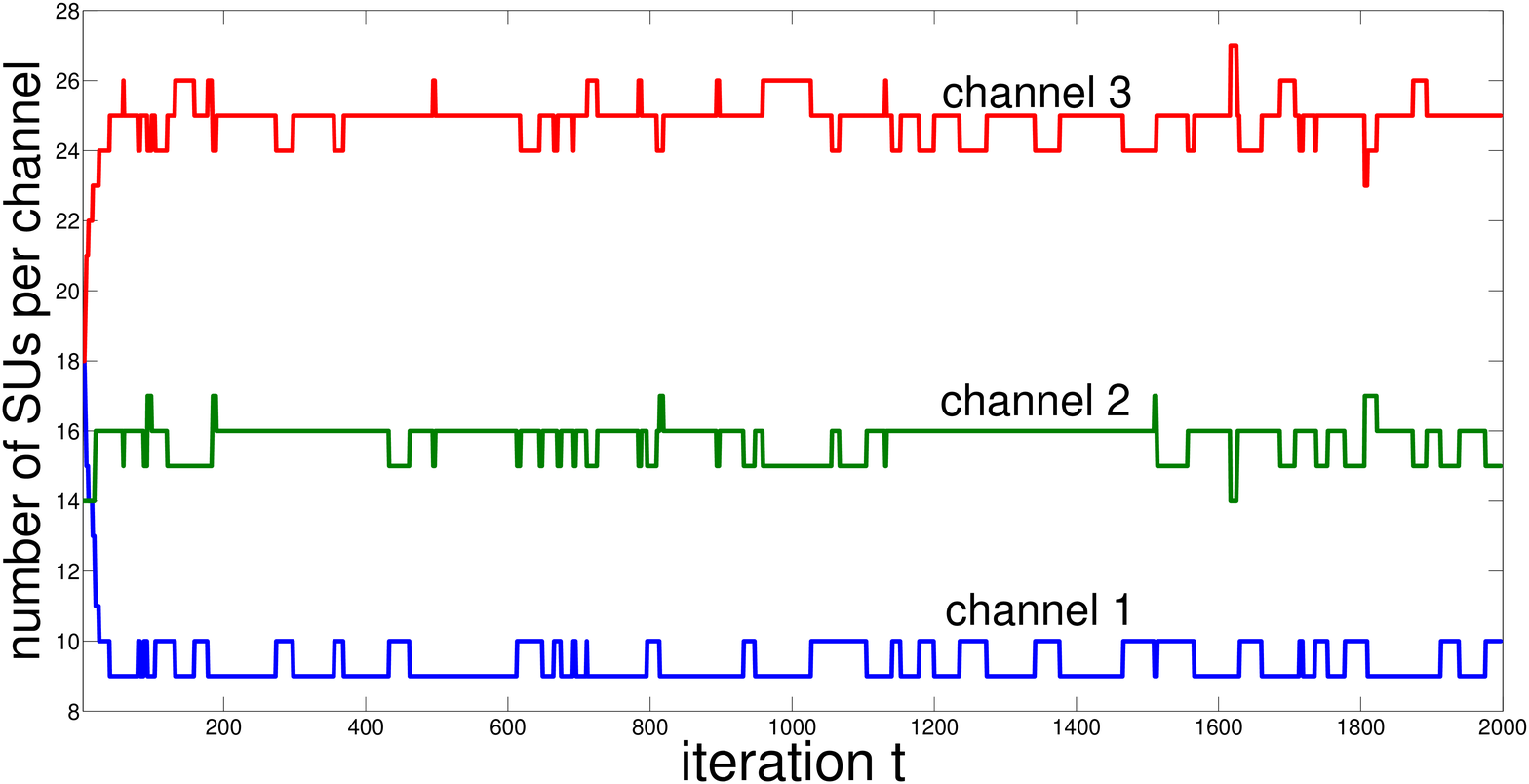}
\caption{DISAP: number of SUs per channel as a function of time \textbf{without} channel constraint}
\label{fig:diSuxCHwoCC}
\end{minipage}\hfill
\end{figure*}

\begin{figure*}
\begin{minipage}[r]{0.49\linewidth}
\includegraphics[width=\linewidth]{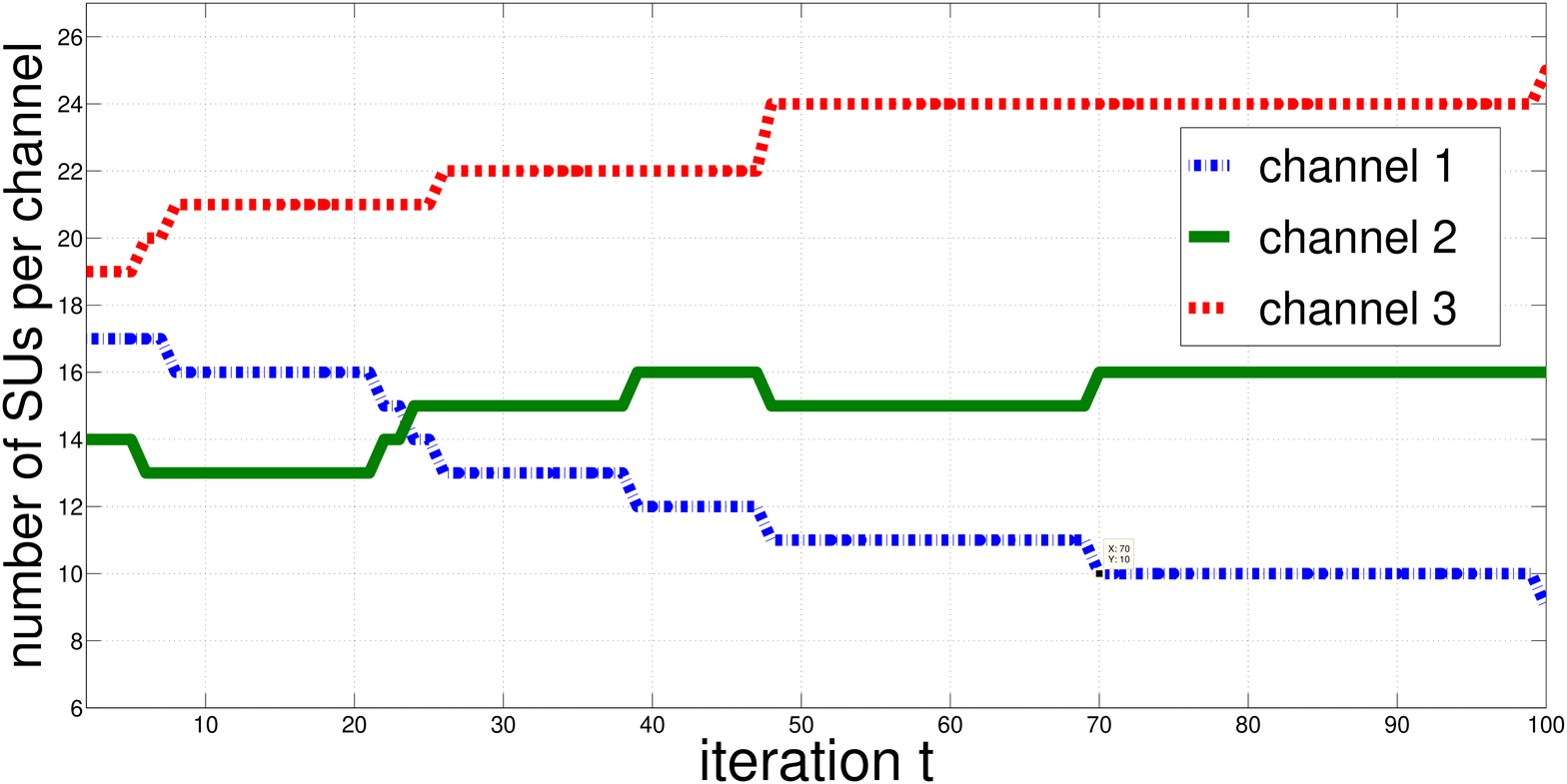}
\caption{PISAP: focus on the convergence phase \textbf{without} channel constraint}
\label{fig:PIR_convergencePH}
\end{minipage} \hfill
\begin{minipage}[c]{0.49\linewidth}
\includegraphics[width=\linewidth]{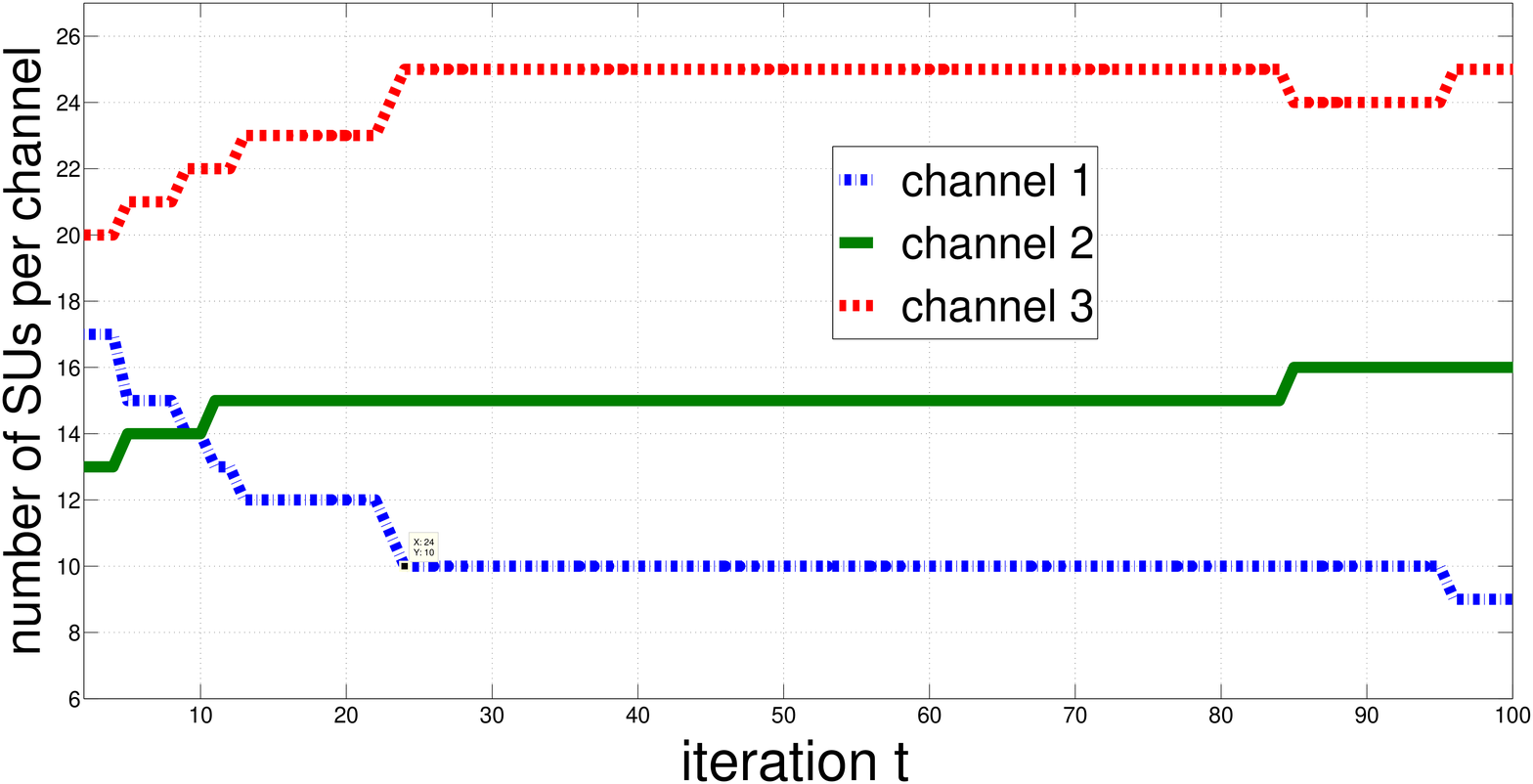}
\caption{DISAP: focus on the convergence phase \textbf{without} channel constraint}
\label{fig:DI_convergencePH}
\end{minipage}\hfill
\vspace{0.5cm}
\end{figure*}

\begin{figure*} 
\begin{minipage}[r]{0.49\linewidth}
\includegraphics[width=\linewidth]{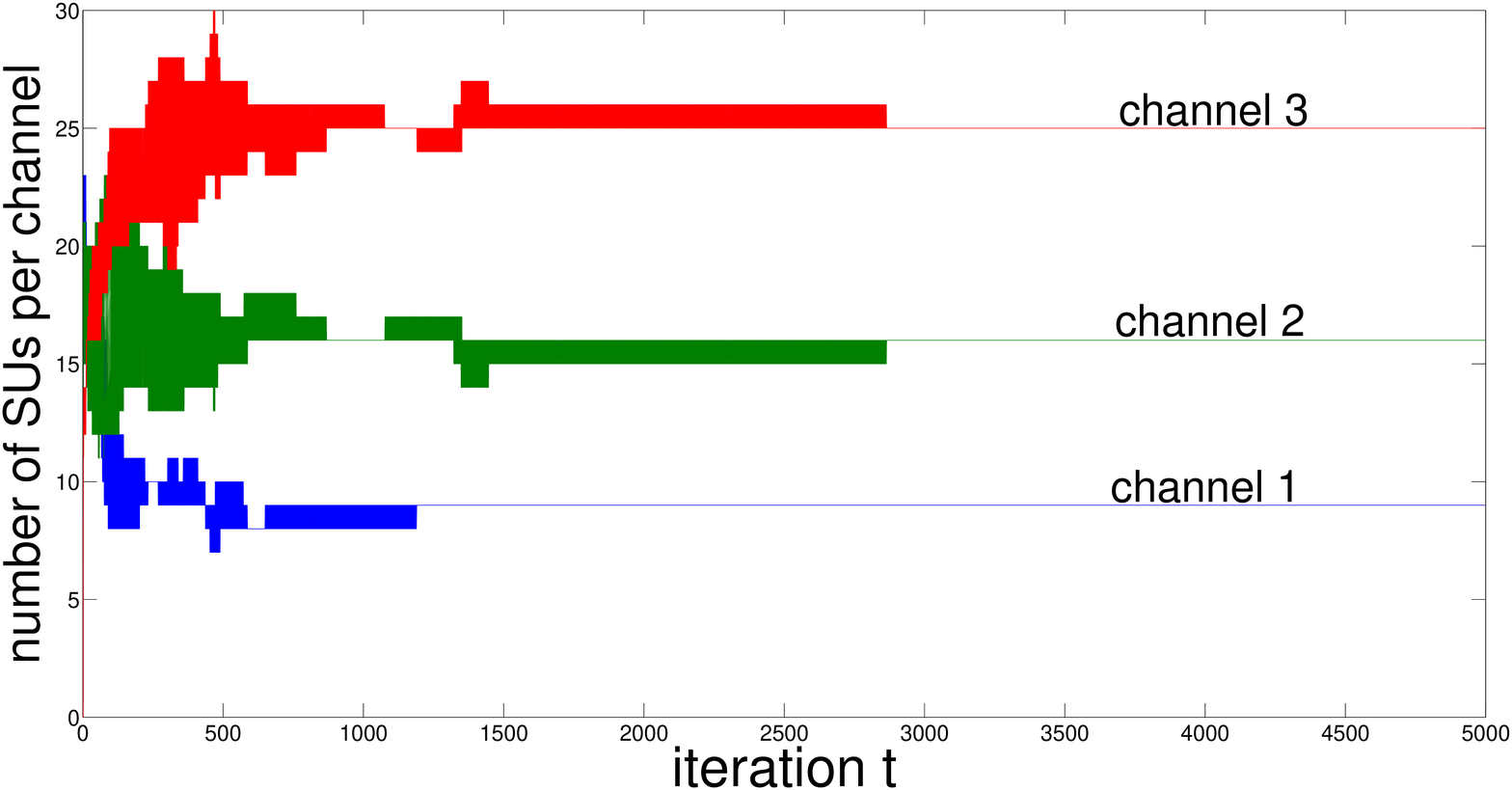}
\caption{PISAP: number of SUs per channel as a function of time \textbf{with} channel constraint}
\label{fig:PIR_SIneighbors_NOlearning}
\end{minipage} \hfill
\begin{minipage}[c]{0.49\linewidth}
\includegraphics[width=\linewidth]{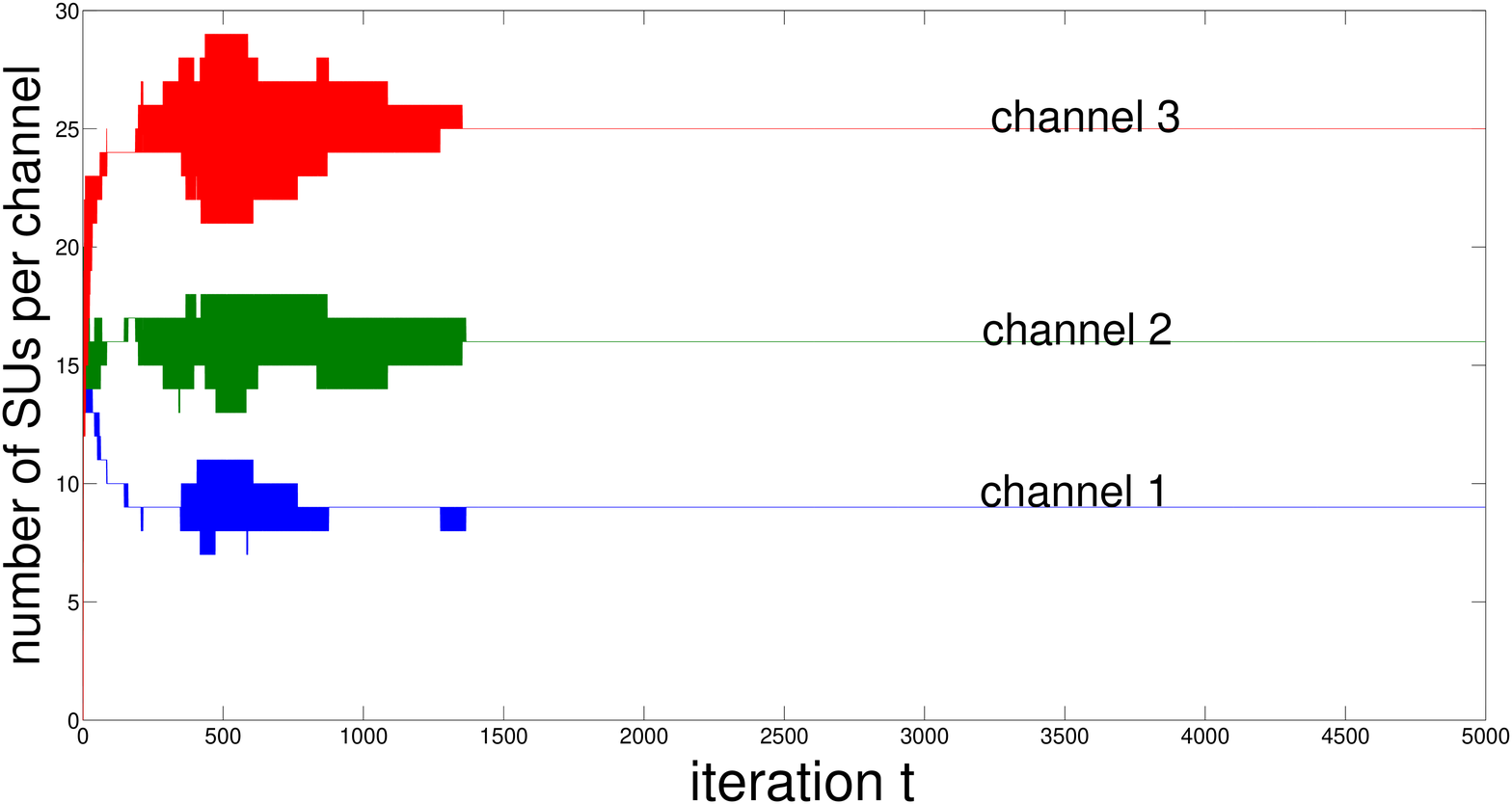}
\caption{DISAP: number of SUs per channel as a function of time \textbf{with} channel constraint}
\label{fig:DI_SIneighbors_NOlearning}
\end{minipage}\hfill
\vspace{0.5cm}
\end{figure*}

\begin{figure}[tbp]
 \centering
\includegraphics[width=0.49\linewidth]{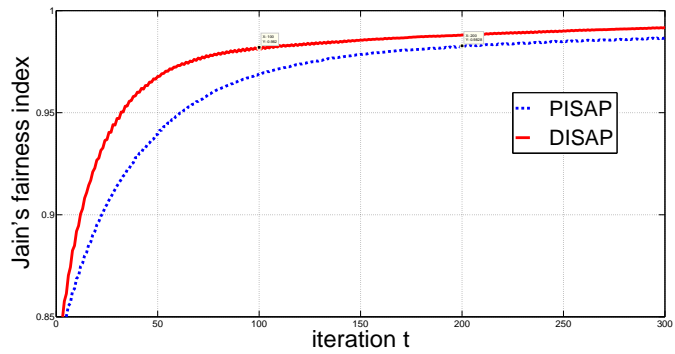}
\vspace{-0.5cm}
\caption{Jain's fairness index of the system with channel constraint as a function of time}
\label{fig:jain}
\end{figure}

\end{document}